\newtheorem{lemma}{Lemma}
\newtheorem{thr}{Theorem}
\newenvironment{proof}[1][Proof]
{\textbf{#1} }{\ \rule{0.5em}{0.5em}}
\begin{document}

\title{Vacuum quantum effects on  Lie groups with bi-invariant metrics}

\author{A. I. Breev }
\email{breev@mail.tsu.ru}
\affiliation{Department of Theoretical Physics, Tomsk State University Novosobornaya
Sq. 1, Tomsk, Russia, 634050}

\author{A. V. Shapovalov }
\email{shpv@phys.tsu.ru}
\affiliation{Department of Theoretical Physics, Tomsk State University Novosobornaya
Sq. 1, Tomsk, Russia, 634050}
\affiliation{Tomsk Polytechnic University,  Lenin ave., 30, Tomsk, Russia, 634034}

\begin{abstract}
We consider the effects of vacuum polarization and particle creation of a scalar field on Lie groups with a non-stationary bi-invariant metric of the Robertson--Walker type.
The vacuum expectation values of the energy momentum tensor for a scalar field determined by the group representation are found using the noncommutative integration method for the field equations instead of separation of variables. The results obtained are illustrated by the example of the three-dimensional rotation group.
\end{abstract}

\pacs{02.20.Qs, 04.62.+v, 31.15.xh}

\keywords{vacuum polarization; particle creation; Friedmann Robertson-Walker model; orbit method; noncommutative integration method; Lie groups \\ Mathematics Subject Classification 2010: 22C05,35R03, 35Q40}

\maketitle

\section{Introduction}

Quantum effects of vacuum polarization and particle creation are considered in cosmological models of general relativity, in the theory of super-strong gravitational fields arising in the vicinity of black holes and neutron stars \cite{Grib,Devis}.

For calculations of quantum effects and for clarifying some substantial issues in gravity and in cosmological models various geometric and symmetry ideas, methods and approaches are fruitful. Among them, we note significant progress in the analysis of extended cosmological models achieved using the Noether symmetries (see the review papers \cite{nojiri-odints,capoz-laur-odints2012} and references therein, as well as \cite{capoz-laur-odints2014,capoz-dialekt2018}). Conformal symmetry was used in study of teleparallel gravity models \cite{bamba-odints}.

Cosmological solutions of the Einstein field equations are often found in the class of metrics that admit a certain symmetry group \cite{Ryan,Stephani}, amounting to an assumption of spatial homogeneity. The presence of symmetry greatly simplifies the description of cosmological models and in most cases allows one to advance in the study of quantum effects \cite{Grib,Devis}.

Following these ideas, we consider a group manifold $M=\mathbb{R}^{1}\times G$ where $G$ is a Lie group with a bi-invariant metric.

A Robertson--Walker type metric defined on the manifold $M$ naturally introduces time in $M$, and the group structure provides high symmetry and nontrivial topology of the space-time. This space-time is homogeneous and isotropic, and it differs from the well-studied Friedmann spaces (Friedmann--Robertson--Walker (FRW) space-time). Moreover, this is an appropriate model where the relationship between algebraic and topological characteristics of the space-time with quantum-field effects could manifest themselves more clearly.

Note that the Lie algebra $\mathfrak{g}$ of the Lie group $G$ defines coajoint orbits but it does not completely define the topology of $G$. For a given Lie algebra, the several models of Robertson--Walker type defined on the group manifold $M$ can be constructed that differ from each other in their spatial topology. The Lie group approach allows us to expand the class of cosmological models under consideration which are characterized by different spatial topologies, whereas for the flat Robertson--Walker model spatial topologies are strictly limited.

The aim of this paper is consideration of vacuum polarization and particle creation of the scalar field in the FRW-type space-time $M$ from the group-theoretical point of view. We explore quantum effects in the framework of the one-loop approximation when the quantum scalar field is considered on the background of the classical gravitation field.

Computations of the quantum effects require complete basis of solutions to quantum wave equations that usually implies the method of separation of variables. Unlike this approach, we use the method of noncommutative integration (MNI) of linear partial differential equations proposed in \cite{spK,nonc95}. The MNI allows us to construct a basis of solutions to a wave equation (in particular, the Klein--Gordon equation) avoiding the separation of variables procedure, and using the symmetry algebra $\mathfrak{g}$ of the equation when $\mathfrak{g}$ is the Lie algebra of the Lie group $G$ of $M$.

To study the phenomenon of particle creation in gravitational fields, there exist several approaches, e.g., the Feynman path integral technique \cite{Duru}, an approach that uses the Green functions \cite{Gav}, the semiclassical WKB approximation \cite{Biswas}, and the Hamiltonian diagonalization method \cite{pavlov2016,pavlov2002,pavlov2001} that we use in this paper.

Application of the group-theoretic approach allows one to investigate the effect of symmetry and nontrivial space topology on the particle creation and vacuum polarization. In this regard, we note that influence of space topology on the Casimir effect in the geometry of two parallel plates and in cylindrical carbon nanotubes was studied in \cite{top2011}. The vacuum polarization effect on a Lie group with a stationary right-invariant metric was considered in detail in \cite{key-1,brunim,br2011}.

Here, we study the additional properties of vacuum expectation values, which are characteristic for the case of a bi-invariant metric.

The paper is organized as follows. In Section \ref{sec:Lie-groups} we introduce necessary notations and define geometric characteristics of a space-time conformally equivalent to $M$. In Section \ref{sec:-representation} we describe a special irreducible $\lambda$-representation of a Lie algebra (see Refs. \cite{nonc95,spK}). In terms of the representation introduced, we describe in Section \ref{sec:Noncommutative-integration} the procedure for constructing the basis of solutions to the Klein-Gordon equation in the framework of the MNI. The basis of solutions obtained is used to study the vacuum expectation values of the energy-momentum tensor (EMT) in Section \ref{sec:Vacuum-average}. Section \ref{sec:Particle-creation} presents the study of particle creation in an external gravitational field in the framework of the Hamiltonian diagonalization method in spirit of \cite{Grib,pavlov2001}. In Section \ref{sec:SO3} the general expressions obtained in previous sections are illustrated by an example of $G = SO(3)$. This space-time corresponds to the cosmological model of Bianchi IX in which the diagonal components of the metric are equal each other \cite{ellis,pritomanov}. In Section \ref{sec:Conclusion} we discuss the results obtained.

\section{Lie groups with a bi-invariant metrics of Robertson--Walker type \label{sec:Lie-groups}}

Let $G$ be a real compact semisimple $(n-1)$ - dimensional
Lie group, and $\mathfrak{g}$ is its Lie algebra. Given an element
$X\in\mathfrak{g}$, the adjoint action of $X$ on $\mathfrak{g}$
is the map $\mathfrak{\mathrm{ad_{X}:}\, g\rightarrow g}$ with $\mathfrak{\mathrm{ad_{X}Y=[X,Y]}}$
for all $Y\in\mathfrak{g}.$ The Killing form $\mathbf{\gamma}(X,Y)=k\cdot\mathrm{Tr}(\mathrm{ad}_{X}\cdot\mathrm{ad}_{Y})$,
where $k$ is real parameter, and $X,Y\in\mathfrak{g}$, defines a
bi-invariant Riemannian metric on $G$,
\begin{gather*}
\langle u,w\rangle_{g}=\mathrm{\gamma}\left((L_{g^{-1}})_{*}u,(L_{g^{-1}})_{*}w\right)=\gamma\left((R_{g^{-1}})_{*}u,(R_{g^{-1}})_{*}w\right),
\end{gather*}
 where $(L_{g})_{*} $ and $(R_{g})_{*}$ are differentials of the left- and
the right-  shifts on the Lie group $G$, respectively; $u,w\in T_{g}G,\, g\in G$.

Consider the Robertson--Walker conformal space $\tilde{M}$ with the metrics
\[
d\tilde{s}^{2}=a^{2}(\tau)ds^{2},\quad ds^{2}=d\tau^{2}-dl_{G}^{2},
\]
where $ds^{2}$ is a metric on the group manifold $M=\mathbb{R}^1\times G$
given by the 2-form $\mathbf{G}=1\oplus(-\mathbf{\gamma})$, and
$\tau\in[0;\infty)$ is the conformal time of a comoving observer,
the scale factor $a(\tau)$ is a smooth real function.

The Ricci tensor $\tilde{\mathbf{R}}(X,Y)$ and the scalar curvature
$\tilde{R}$ of the space-time $\tilde{M}$ are related to the
Ricci tensor $\mathbf{R}(X,Y)$ and the scalar curvature $R$ on $M$
as
\cite{Devis}:
\begin{gather}\nonumber
\tilde{\mathbf{R}}(X,Y)=\mathbf{R}(X,Y)-(n-2)(c(\tau)^{2}-\dot{c}(\tau))e_{0}(X)e_{0}(Y)+\\
+\mathbf{G}(X,Y)((n-2)c^{2}(\tau)+\dot{c}(\tau)),\quad X,Y\in\mathbb{R}^{1}\times\mathfrak{g},\nonumber \\ \nonumber
\tilde{R}=a^{-2}(\tau)\left(R+(n-1)(n-2)c^{2}(\tau)+2(n-1)\dot{c}(\tau)\right),\quad c(\tau)\equiv\frac{\dot{a}(\tau)}{a(\tau)},
\end{gather}
where $\dot{a}(\tau)=d a(\tau)/d\tau$, $e_{0}(X)=X^{0}=\langle X,\partial_{\tau}\rangle$
is a projection from $\mathbb{R}^{1}\times\mathfrak{g}$ to $\mathbb{R}^{1}$.
The value of the Ricci tensor for vector fields $u,v\in T_{(\tau,g)}M$
is found using right shifts on the group manifold $M$:
\[
R(u,w)=\mathbf{R}\left((R_{g^{-1}})_{*}u,(R_{g^{-1}})_{*}w\right).
\]
 As the Lie group $G$ with a bi-invariant metric is a space of
constant curvature, then the following relation holds \cite{barut}:
\begin{equation}
\mathbf{R}(X,Y)=-\frac{R}{n-1}\gamma(X,Y),\quad R=-\frac{n-1}{4k}\quad X,Y\in\mathfrak{g}.\label{RabeqR}
\end{equation}
Thus, the structure constants of the Lie algebra $\mathfrak{g}$,
the scale factor $a(\tau)$ and the parameter $k$ define all
geometric characteristics of the space-time $\tilde{M}.$

\section{$\lambda$-representation of a Lie groups\label{sec:-representation}}

A Lie group $G$ acts on a dual space $\mathfrak{g}^{*}$ by a coadjoint
representation $Ad^{*}:G\times\mathfrak{g}^{*}\rightarrow\mathfrak{g}^{*}$
that stratifies $\mathfrak{g}^{*}$ into coajoint orbits \cite{kirr}. We call the coajoint orbits
of maximal dimension, equal to $\mathrm{dim}\mathfrak{g}-\mathrm{ind}\mathfrak{g}$,
non-degenerate. The algebra index, $\mathrm{ind}\mathfrak{g}$, is
defined as the number of independent Casimir functions $K_{\mu}(f)\text{, \ensuremath{f}\ensuremath{\in\mathfrak{g}^{*}}}$,
$\mu=1,\dots,\mathrm{ind}\mathfrak{g}$, on the dual space $\mathfrak{g}^{*}$
with respect to the Poisson -- Lie bracket
\[
\{\psi_{1},\psi_{2}\}(f)=\langle f,\left[\nabla\psi_{1}(f),\nabla\psi_{2}(f)\right]\rangle,\quad\psi_{1},\psi_{2}\in C^{\infty}(\mathfrak{g}^{*}),\quad f\in\mathfrak{g}^{*},
\]
where $[\cdot,\cdot]$ is a commutator in the Lie algebra $\mathfrak{g}$
and $\langle\cdot,\cdot\rangle$ denotes the natural pairing between
the spaces $\mathfrak{g}^{*}$ and $\mathfrak{g}$.

Let $\mathcal{O}_{\lambda}$ be a non-degenerate coajoint orbit passing through
a general covector $\lambda\in\mathfrak{g}^{*}$. Locally, one can
always introduce the Darboux coordinates $(p,q)\in P\times Q$ on the
orbit $\mathcal{O}_{\lambda}$ in which the Kirillov form $\omega_{\lambda}$
defining a symplectic structure on the coajoint orbits has the canonical form
$\omega_{\lambda}=dp^{a}\wedge dq_{a}$, $a=1,\dots,\mathrm{dim}\,\mathcal{O}_{\lambda}/2$.

Denote by $\mathfrak{g}_{\mathbb{C}}$ a complex extension of the
Lie algebra $\mathfrak{g}$. The canonical embedding $f$: $\mathcal{O}_{\lambda}\rightarrow\mathfrak{g}_{\mathbb{C}}^{*}$
is uniquely determined by the functions $f_{X}(p,q,\lambda),\ X\in\mathfrak{g}$,
satisfying the system of equations
\[
\{f_{X},f_{Y}\}=f_{[X,Y]},\quad f_{X}(0,0,\lambda)=\lambda(X),\quad K_{\mu}(f(p,q,\lambda))=K_{\mu}(\lambda),\quad X,Y\in\mathfrak{g}.
\]
We consider the functions $f_{X}(p,q,\lambda)$ to be linear
in the variables $p_{a}$:
\begin{gather}
f_{X}(p,q,\lambda)=\alpha_{X}^{a}(q)p_{a}+\chi_{X}(q,\lambda),\quad q\in Q,\quad p\in P.\label{canf}
\end{gather}
The vector fields $\alpha_{X}(q)=\alpha_{X}^{a}(q)\partial_{q^a}$ are
generators of a transformation group $G_{\mathbb{C}}=\exp(\mathfrak{g}_{\mathbb{C}})$
of a homogeneous space $Q$, and the functions $\chi_{X}(q,\lambda)$
realize \emph{a non-trivial extension} of the vector fields $\alpha_{X}(q)$
\cite{prolong,bardef}. Note that for non-degenerate coajoint orbits there
 always exist the canonical embedding functions having the form (\ref{canf}).

We introduce a measure $d\mu(q)$ and a scalar product
\begin{equation}
(\psi_{1},\psi_{2})=\int_{Q}\overline{\psi_{1}(q)}\psi_{2}(q)d\mu(q),\quad d\mu(q)=\rho(q)dq\label{sp_Q},
\end{equation}
in the space of functions $L_{2}(Q,d\mu(q))$ on a partially holomorphic
manifold $Q$. Here $\overline{\psi_{1}(q)}$ denotes the complex conjugate to $\psi_{1}(q)$.
The first-order operators
\begin{gather}
\ell_{X}(q,\lambda)=if_{X}(-i\partial_{q},q,\lambda)=\alpha_{X}^{a}(q)\partial_{q^{a}}+i\chi_{X}(q,\lambda),\quad K_{\mu}(-i\ell(q,\lambda))=K_{\mu}(\lambda)\label{lrpr}
\end{gather}
realize, by definition, an irreducible $\lambda$-representation of a Lie algebra
$\mathfrak{g}$ in $L_{2}(Q,d\mu(q))$ and are the result of $qp$-quantization
on the coajoint orbit $\mathcal{O}_{\lambda}$ \cite{spK,nonc95,kirr}.

Without loss of generality, we assume that the operators $-i\ell_{X}(q,\lambda)$
are Hermitian with respect to the scalar product (\ref{sp_Q}).

We define the generalized functions $D_{q\overline{q'}}^{\lambda}(g)$
as a solution to the system of equations
\begin{gather}
\left(\eta_{X}(g)+\ell_{X}(q,\lambda)\right)D_{q\overline{q'}}^{\lambda}(g)=0,\quad\left(\xi_{X}(g)+\overline{\ell_{X}(q',\lambda)}\right)D_{q\overline{q'}}^{\lambda}(g)=0,\label{tD}
\end{gather}
 where $\xi_{X}(g)=(L_{g})_{*}X$ and $\eta_{X}(g)=-(R_{g})_{*}X$
are left- and right- invariant vector fields on a Lie group $G$,
respectively.

The functions $D_{q\overline{q'}}^{\lambda}(g)$ provide
the lift of the $\lambda$-representation of the Lie algebra $\mathfrak{g}$
to the local unitary representation $T^{\lambda}$ of its Lie group
$G$,
\begin{equation}\nonumber
(T_{g}^{\lambda}\varphi)(q)=\int_{Q}D_{q\overline{q'}}^{\lambda}(g)\varphi(q')d\mu(q'),\quad\left.
\frac{d}{dt}(T_{\exp(tX)}^{\lambda}\varphi)\right|_{t=0}(q)=\ell_{X}(q,\lambda)\varphi(q),
\end{equation}
satisfy the relations
\begin{gather}\nonumber
	D_{q\overline{q}'}^{\lambda}(g_{1}\cdot g_{2})=\int_{Q}D_{q\overline{q}''}^{\lambda}(g_{1})D_{q''\overline{q}'}^{\lambda}(g_{2})d\mu(q''),\\  D_{q\overline{q}'}^{\lambda}(g)=\overline{D_{q'\overline{q}}^{\lambda}(g^{-1})},\quad D_{q\overline{q'}}^{\lambda}(e)=\delta(q,\overline{q'}),\label{condD}
\end{gather}
where $g_{1},g_{2}\in G$, and possess the properties of orthogonality and completeness \cite{barut}:
\begin{gather}
\int_{G}\overline{D_{\tilde{q}\overline{\tilde{q}'}}^{\tilde{\lambda}}}(g)D_{q\overline{q}'}^{\lambda}(g)d\mu(g)=\delta(q,\overline{\tilde{q}})\delta(\tilde{q}',\overline{q}')\delta(\tilde{\lambda},\lambda),\label{Dort}\\
\int_{Q\times Q\times J}\overline{D_{q\overline{q'}}^{\lambda}(\tilde{g})}D_{q\overline{q}'}^{\lambda}(g)d\mu(q)d\mu(q')d\mu(\lambda)=\delta(\tilde{g},g).\label{Dful}
\end{gather}

Note that the functions $D_{q\overline{q'}}^{\lambda}(g)$ are defined
globally on the Lie group $G$ iff the Kirillov condition of integerness
of the orbit $\mathcal{O}_{\lambda}$ holds \cite{kirr,spK}:
\begin{equation}
\frac{1}{2\pi}\int_{\gamma\in H_{1}(\mathcal{O}_{\lambda})}\omega_{\lambda}=n_{\gamma}\in\mathbb{Z},\label{cel_D}
\end{equation}
where $H_{1}(\mathcal{O}_{\lambda})$ is a one-dimensional homology
group of the stationarity group $G^{\lambda}=\{g\in G\mid Ad_{g}^{*}\lambda=\lambda\}$.
The functions $D_{q\overline{q'}}^{\lambda}(g)$ are eigenfunctions
for Casimir operators $\hat{K}_{\mu}(g)=K_{\mu}(-i\xi(g))=K_{\mu}(-i\eta(g))$
of the Lie group $G$:
\begin{gather}
\hat{K}_{\mu}(g)D_{q\overline{q'}}^{\lambda}(g)=\omega_{\mu}(\lambda)D_{q\overline{q'}}^{\lambda}(g).\label{KasD}
\end{gather}

Indeed, from (\ref{tD}) it follows that $\hat{K}_{\mu}(g)D_{q\overline{q'}}^{\lambda}(g)=K_{\mu}(i\ell(q,\lambda))D_{q\overline{q'}}^{\lambda}(g)$.
Then, in view of (\ref{lrpr}) and the homogeneity of Casimir functions, we obtain (\ref{KasD}).

\section{Noncommutative integration of the Klein--Gordon equation\label{sec:Noncommutative-integration}}

The Klein--Gordon equation for a complex scalar field $\tilde{\varphi}(g)$
on the space $\tilde{M}$ can be written as
\begin{gather}
\left(\tilde{\square}+\zeta\tilde{R}+m^{2}\right)\tilde{\varphi}(\tau,g)=0,\quad\zeta=\frac{n-2}{4(n-1)}.\label{kgtM}
\end{gather}
Here $m$ is a mass of the field $\tilde{\varphi}(\tau,g)$,
$\tilde{\square}$ is the d'Alembertian in $\tilde{M}$ of the form
\begin{gather*}
\tilde{\square}=a^{-2}(\tau)\left(\partial_{\tau}^{2}+(n-2)c(\tau)\partial_{\tau}-\Delta_{G}\right),
\end{gather*}
and $\Delta_{G}$ is the Laplace operator on the Lie group G. The
basis of solutions, $\{\tilde{\varphi}_{\sigma}(\tau,g)\} $, to equation (\ref{kgtM}) is sought in the form
\begin{gather}
\tilde{\varphi}_{\sigma}(\tau,g)=a^{\frac{2-n}{2}}(\tau)\varphi_{\sigma}(\tau,g),\quad\varphi_{\sigma}(\tau,g)=f_{\Lambda}(\tau)\Phi_{\sigma}(g),\quad\Lambda\in\sigma,\label{class_sol}
\end{gather}
where $\sigma$ is a set of quantum numbers, which parameterizes the basis solutions, and $\varphi_{\sigma}(\tau,g)$
describes a scalar field in the space $M$. Then the functions $\varphi_{\sigma}(\tau,g)$
and $\Phi_{\sigma}(g)$ satisfy the equations
\begin{equation}
\left(\partial_{\tau}^{2}-\Delta_{G}+\zeta R+a^{2}(\tau)m^{2}\right)\varphi(\tau,g)=0,\label{kg01}
\end{equation}
and
\begin{gather}
-\Delta_{G}\Phi_{\sigma}(g)=\Lambda^{2}\Phi_{\sigma}(g),\label{kf}
\end{gather}
respectively. The evolution of the scale factor $f_{\Lambda}(\tau)$,
dependent on the conformal time $\tau,$ is governed by the oscillator
equation with variable frequency:
\begin{gather}
\ddot{f}_{\Lambda}+\omega^{2}(\tau)f_{\Lambda}=0,\quad\omega^{2}(\tau)=\Lambda^{2}+a^{2}(\tau)m^{2}+\zeta R.\label{eq:ff3}
\end{gather}
The normalization condition
\begin{gather*}
\sqrt{\gamma}\int_{G}\overline{\Phi_{\sigma}(g)}\Phi_{\sigma'}(g)d\mu(g)=\delta(\sigma,\sigma'),
\end{gather*}
 is imposed on the solutions of equation (\ref{kf}). Here, $\gamma=\mathrm{det}(\gamma_{ab})$,
and $d\mu(g)$ is a Haar invariant measure on the Lie group $G$.

Note that the function $\dot{\overline{f_{\Lambda}}}f_{\Lambda}-\overline{f_{\Lambda}}\dot{f_{\Lambda}}$
is constant on solutions of equation (\ref{eq:ff3}). We also impose
the normalization condition

\begin{equation}
i\left(\dot{\overline{f_{\Lambda}}}f_{\Lambda}-\overline{f_{\Lambda}}\dot{f_{\Lambda}}\right)=1\text{.}\label{eq:normf}
\end{equation}
on solutions of this equation.

If the metric changes adiabatically (adiabatic approximation), $\dot{a}/a\ll1$,
then the solution of the equation (\ref{eq:ff3}) can be found in
the form of a generalized WKB approximation \cite{and87}:
\begin{equation}
f_{\Lambda}(\tau)=\frac{1}{\sqrt{2W(\tau)}}\exp\left(i\int_{\tau_{0}}^{\tau}W(\tau')d\tau'\right).\label{ffW}
\end{equation}
This function satisfies the Wronskian condition (\ref{eq:normf}).
For the function $W(\tau)$ we have the nonlinear equation
\begin{equation}
W^{2}(\tau)=\omega^{2}(\tau)-\frac{1}{2}\left(\frac{\ddot{W}(\tau)}{W(\tau)}-\frac{3}{2}\left(\frac{\dot{W}(\tau)}{W(\tau)}\right)^{2}\right).\label{myW}
\end{equation}
Equation \eqref{myW} can be solved iteratively as follows (see, e.g., \cite{Kohri2017}).
Let us start with the zero-order term, which contains no time derivatives,
i.e. $W_{(0)}(\tau)=\omega(\tau)$. Using $W_{(0)}(\tau)$ in the
right hand side of (\ref{myW}), we can find the
solution $W_{(2)}(\tau)$, which contains terms
with time derivatives up to the second order.

Substituting now $W_{(2)}(\tau)$ in the the right hand side of \eqref{myW}, we obtain
\begin{align}
W_{(4)}(\tau) & =\omega-\frac{1}{4}\frac{m^{2}a^{2}}{\omega^{3}}\left(\frac{\dot{a}^{2}}{a^{2}}+\frac{\ddot{a}}{a}\right)+\frac{1}{16}\frac{m^{2}a^{2}}{\omega^{5}}\left(3\frac{\ddot{a}^{2}}{a^{2}}+
4\frac{\dot{a} a^{(3)}}{a^2}+\frac{a^{(4)}}{a}\right)+\nonumber \\
 & +\frac{5}{8}\frac{m^{4}a^{4}}{\omega^{5}}\frac{\dot{a}^{2}}{a^{2}}-\frac{1}{32}\frac{m^{4}a^{4}}{\omega^{7}}\left(19\frac{\dot{a}^{4}}{a^{4}}+122\frac{\dot{a}^{2}}{a^{2}}\frac{\ddot{a}}{a}+19\frac{\ddot{a}^{2}}{a^{2}}+28\frac{\dot{a}{a^{(3)}}}{a^{2}}\right)+\nonumber \\
 &+\frac{221}{32}\frac{m^{6}a^{6}}{\omega^{9}}\left(\frac{\dot{a}^{4}}{a^{4}}+\frac{\ddot{a}\dot{a}^{2}}{a^{3}}\right)-\frac{1105}{128}\frac{m^{8}a^{8}}{\omega^{11}}\frac{\dot{a}^{4}}{a^{4}}\label{W4}
\end{align}
where $\dot{a}=da(\tau)/d\tau$, $\ddot{a}=d^2 a(\tau)/d\tau^2$, $a^{(k)}=d^k a(\tau)/d\tau^k$, $k>2 $. We can see that $W_{(4)}(\tau)$ contains time derivatives of the fourth order.

As the metric on the Lie group $G$ is bi-invariant, the Laplace operator $\Delta_{G}$ is the Casimir operator $H(-i\eta(g))$ on the Lie group $G$, where $H(f)=\gamma^{ab}f_{a}f_{b}$.

In Section \ref{sec:-representation}, we defined the functions $D_{q\overline{q'}}^{\lambda}(g^{-1})$. They are determined by operators of the $\lambda$-representation (\ref{lrpr}) of the Lie algebra $\mathfrak{g}$ from the system of equations (\ref{tD}), are eigenfunctions for Casimir operators (see (\ref{KasD})), and satisfy the completeness relations (\ref{Dful}). Therefore, it is convenient to choose the set of functions
\[
\Phi_{\sigma}(g)=\gamma^{-1/4}D_{q\overline{q'}}^{\lambda}(g^{-1}),\quad\sigma=(q,q',\lambda)
\]
 as the basis of solutions to the equation (\ref{kf}) with eigenvalues $\Lambda^{2}(\lambda)=H(-i\ell(0,\lambda))$.

Below the function $f_{\Lambda}(\tau)$ will be denoted as $f_{\lambda}(\tau)$.

\section{Vacuum expectation values of the energy-momentum tensor for a scalar field \label{sec:Vacuum-average}}

Let us perform the second quantization of a charged quantum field
$\varphi$. To this end, we expand the field $\varphi$ in terms of a complete system of solutions to the equation (\ref{kg01}):
\begin{align*}
\varphi(\tau,g) & =\int d\mu(\sigma)\left[\varphi_{\sigma}^{(+)}(\tau,g)a_{\sigma}^{(+)}+\varphi_{\sigma}^{(-)}(\tau,g)a_{\sigma}^{(-)}\right],\quad\varphi_{\sigma}^{(-)}(\tau,g)=\overline{\varphi_{\sigma}^{(+)}(\tau,g)}.
\end{align*}
Impose the canonical commutation relations
\[
\left[a_{\sigma}^{(-)},a_{\sigma'}^{(+)\dagger}\right]=\left[a_{\sigma}^{(-)\dagger},a_{\sigma'}^{(+)}\right]=\delta(\sigma,\sigma'),\quad\left[a_{\sigma}^{(\pm)},a_{\sigma'}^{(\pm)}\right]=\left[a_{\sigma}^{(\pm)\dagger},a_{\sigma'}^{(\pm)\dagger}\right]=0,
\]
where $a_{\sigma}^{(+)}$ and $a_{\sigma}^{(-)}$ are the creation operator of antiparticles and
 the annihilation operators of particles, respectively. The adjoint
operators $a_{\sigma}^{(-)\dagger}$ and $a_{\sigma}^{(+)\dagger}$
are the antiparticle annihilation operators and the particle creation
operators, respectively.

A vacuum state of the scalar field $\varphi $ is determined by the equations
\begin{equation}\label{vac00}
a_{\sigma}^{(-)}\mid0\rangle=a_{\sigma}^{(-)\dagger}\mid0\rangle=0,\quad\langle0\mid0\rangle=1.
\end{equation}
The EMT for the scalar field $\varphi(\tau,g)$ on the space-time $M$ reads \cite{Grib}:
\begin{gather}\nonumber
T(\eta_{X},\eta_{Y};m)\{\varphi,\varphi\}= \\ \nonumber
=\left(1-2\zeta\right)\overline{\eta_{(X}\varphi}\eta_{Y)}\varphi+\left(2\zeta-1/2\right)\mathbf{G}(X,Y)G^{AB}\overline{\eta_{A}\varphi}\eta_{B}\varphi-\\ \nonumber
-\left[\zeta\mathbf{R}(X,Y)+\left(2\zeta-1/2\right)\mathbf{G}(X,Y)(m^{2}+\zeta R)\right]\overline{\varphi}\varphi-\\
\zeta[(\nabla_{\eta_{(X}}\nabla_{\eta_{Y)}}\overline{\varphi})\varphi+\overline{\varphi}(\nabla_{\eta_{(X}}\nabla_{\eta_{Y)}})\varphi],\label{Teta01}
\end{gather}
where $\overline{\eta_{(X}\varphi}\eta_{Y)}\varphi = (\overline{\eta_{X}\varphi}\eta_{Y}\varphi + \overline{\eta_{Y}\varphi}\eta_{X}\varphi)/2$, $\nabla_{\eta_{X}}$ is the covariant derivative along a vector field $\eta_{X}$, $\nabla_{\eta_{(X}}\nabla_{\eta_{Y)}} = (\nabla_{\eta_{X}}\nabla_{\eta_{Y}}+\nabla_{\eta_{Y}}\nabla_{\eta_{X}})/2$, $X,Y\in\mathbb{R}\times\mathfrak{g}$.

The EMT for a scalar field on the space-time $\tilde{M}$ is shown to be related
to the initial EMT on $M$ as \cite{key-1}:
\begin{equation}
\tilde{T}(\eta_{X},\eta_{Y};m)\{\tilde{\varphi},\tilde{\varphi}\}=a^{2-n}(\tau)T(\eta_{X},\eta_{Y};a(\tau)m)\{\varphi,\varphi\}.\label{cod}
\end{equation}

Consider a vacuum expectation value of the EMT (\ref{cod}) on $M$ relative to
the vacuum state determined by equalities (\ref{vac00})
\begin{equation}\label{vacsr}
\langle T(\eta_{X},\eta_{Y};a(\tau)m)\rangle_{0}=\int T(\eta_{X},\eta_{Y};a(\tau)m)\{\varphi_{\sigma}(\tau,g),\varphi_{\sigma}(\tau,g)\}d\mu(\sigma).
\end{equation}
Here, the vacuum expectation values of the EMT on the space
$M$ are described by the integral over all quantum numbers of the EMT taken over the complete set of solutions to the Klein--Gordon equation (summation is understood over discrete quantum numbers).

In general, the vacuum averages of the product of fields on $M$ are defined in the same way:
\begin{equation}\nonumber
   \langle \hat{\varphi} \hat{\varphi} \rangle_0 =
    \int \overline{\varphi_{\sigma}(\tau,g)}\varphi_{\sigma}(\tau,g)d\mu(\sigma).
\end{equation}
Then, in view of the expression (\ref{cod}) for the vacuum expectation values of the EMT on the space $\tilde {M}$, relative to the vacuum state defined by the equations (\ref{vac00}), we obtain
\begin{gather*}
\langle\hat{\tilde{T}}(\eta_{X},\eta_{Y};m)\rangle_{0}=a^{2-n}(\tau)\langle T(\eta_{X},\eta_{Y};a(\tau)m)\rangle_{0}.
\end{gather*}

A bi-invariant metric on the Lie group $G$ leads to an important property of vacuum averages, which describes the following Theorem.
\begin{thr}
The vacuum expectation values $\langle\eta_{X}\hat{\varphi}\eta_{Y}\hat{\varphi}\rangle_{0}$
on a Lie group manifold $M$ with a bi-invariant metric have the $Ad_{g}$-invariance
property \label{Th01}
\begin{gather}
\langle\eta_{Ad_{g}X}\hat{\varphi}\eta_{Ad_{g}Y}\hat{\varphi}\rangle_{0}=\langle\eta_{X}\hat{\varphi}\eta_{Y}\hat{\varphi}\rangle_{0},\quad g\in G,\quad X,Y\in\mathfrak{g}.\label{futv1}
\end{gather}
\end{thr}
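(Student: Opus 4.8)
The plan is to rewrite the bilinear vacuum average as a second derivative of a \emph{class function} on $G$ and then invoke the identity $\exp(t\,Ad_{g}Z)=g\exp(tZ)g^{-1}$. First, since the right-invariant fields $\eta_X$ differentiate only the group argument of $\varphi_{\sigma}=f_{\lambda}(\tau)\Phi_{\sigma}$ and commute with the $\sigma=(q,q',\lambda)$-integration, and since $\eta_X$ is a real vector field so that $\overline{\eta_X\Phi}=\eta_X\overline{\Phi}$, I would write the average as the diagonal restriction of a two-point kernel,
\[
\langle\eta_X\hat{\varphi}\,\eta_Y\hat{\varphi}\rangle_{0}=\left[\eta_X^{(1)}\eta_Y^{(2)}K(g_1,g_2)\right]_{g_1=g_2},\qquad K(g_1,g_2)=\int d\mu(\sigma)\,|f_{\lambda}(\tau)|^{2}\,\overline{\Phi_{\sigma}(g_1)}\,\Phi_{\sigma}(g_2),
\]
where $\eta_X^{(1)}$ and $\eta_Y^{(2)}$ act on $g_1$ and $g_2$ respectively.

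Second, I would evaluate $K$ using the definition $\Phi_{\sigma}(h)=\gamma^{-1/4}D_{q\overline{q'}}^{\lambda}(h^{-1})$. The relation $D_{q\overline{q'}}^{\lambda}(h)=\overline{D_{q'\overline{q}}^{\lambda}(h^{-1})}$ from (\ref{condD}) gives $\overline{\Phi_{\sigma}(g_1)}=\gamma^{-1/4}D_{q'\overline{q}}^{\lambda}(g_1)$, after which the convolution rule in (\ref{condD}) collapses the internal labels,
\[
\int d\mu(q)\,d\mu(q')\,D_{q'\overline{q}}^{\lambda}(g_1)\,D_{q\overline{q'}}^{\lambda}(g_2^{-1})=\int d\mu(q')\,D_{q'\overline{q'}}^{\lambda}(g_1 g_2^{-1})=\chi_{\lambda}(g_1 g_2^{-1}),
\]
the character of the representation $T^{\lambda}$. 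Hence $K(g_1,g_2)=\Psi(g_1 g_2^{-1})$ with $\Psi(u)=\gamma^{-1/2}\int d\mu(\lambda)\,|f_{\lambda}(\tau)|^{2}\chi_{\lambda}(u)$, which is a class function on $G$ since every character is invariant under conjugation.

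Third, because $\eta_X$ generates the flow $h\mapsto\exp(-tX)h$, acting on the two slots and restricting to the diagonal sends $g_1 g_2^{-1}\to e$, so the evaluation point drops out and
\[
\langle\eta_X\hat{\varphi}\,\eta_Y\hat{\varphi}\rangle_{0}=\left.\frac{\partial^{2}}{\partial t\,\partial s}\right|_{t=s=0}\Psi\!\left(\exp(-tX)\exp(sY)\right).
\]
Replacing $X\to Ad_{g}X$, $Y\to Ad_{g}Y$ and using $\exp(t\,Ad_{g}Z)=g\exp(tZ)g^{-1}$ turns the argument into $g\exp(-tX)\exp(sY)g^{-1}$, on which $\Psi$ takes the same value by its conjugation invariance; differentiating in $t,s$ then yields (\ref{futv1}).

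The step I expect to be the crux is the second one: recognizing that integrating the matrix elements $D_{q\overline{q'}}^{\lambda}$ over \emph{both} internal labels collapses the kernel to a character. This is exactly where the bi-invariance of the metric enters, through the fact that $\Delta_G$ is the Casimir $H(-i\eta(g))$ so that the $\Phi_{\sigma}$ are genuine representation matrix elements obeying the reproducing identities (\ref{condD}). Once $K$ is known to be a class function, the $Ad_{g}$-invariance is immediate.
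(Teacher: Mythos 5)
Your proof is correct, but it takes a genuinely different route from the paper's. The paper never forms the two-point kernel or invokes characters in the proof: it uses the defining system (\ref{tD}) to trade each $\eta_X$ acting on $D_{q\overline{q'}}^{\lambda}(g^{-1})$ for the $\lambda$-representation operator $\overline{\ell_X(q',\lambda)}$, collapses the $q$-integral via (\ref{condD}) to the explicit intermediate formula (\ref{vv}) involving $\left[\overline{\ell_{X}(q',\lambda)\ell_{Y}(q',\lambda)}D_{q\overline{q'}}^{\lambda}(e)\right]_{q=q'}$, then separately computes $\langle\eta_{Ad_{g}X}\hat{\varphi}\eta_{Ad_{g}Y}\hat{\varphi}\rangle_{0}=\langle\xi_{X}\hat{\varphi}\xi_{Y}\hat{\varphi}\rangle_{0}$ and uses the identity (\ref{deltaEQ}) at $g=e$ to show it reduces to the same expression. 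You instead integrate out \emph{both} internal labels, identify the kernel as a class function $\Psi(g_1g_2^{-1})$ built from characters, and read off $Ad_g$-invariance from conjugation-invariance of the trace. Your route is more conceptual and yields, as a free byproduct, the spatial homogeneity of the average (the evaluation point drops out); the paper's route is more computational but produces the formula (\ref{vv}) that is then reused to derive $F_{XY}(\lambda)$ and the components (\ref{T00})--(\ref{Tab}), so it is not wasted effort. Two minor caveats on your argument: conjugation-invariance of $\chi_{\lambda}$ tacitly assumes $T^{\lambda}$ is a genuine global representation, i.e.\ the integrality condition (\ref{cel_D}) holds (the paper restricts to integer orbits, so this is consistent), and the interchange of the $\eta$-derivatives with the $\sigma$-integration together with the diagonal restriction of the kernel is formal --- but it is exactly the same level of rigor as the paper's own manipulations.
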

\begin{proof}
Consider the expression for vacuum expectation values
$\langle\eta_{X}\hat{\varphi}\eta_{Y}\hat{\varphi}\rangle_{0}$
taking into account (\ref{class_sol}):
\begin{gather}\nonumber
\langle\eta_{X}\hat{\varphi}\eta_{Y}\hat{\varphi}\rangle_{0}=\\ \nonumber
=\frac{1}{\sqrt{\gamma}}\int_{Q\times Q\times J}|f_{\lambda}(\tau)|^{2}\overline{\left(\eta_{X}D_{q\overline{q'}}^{\lambda}(g^{-1})\right)}\left(\eta_{Y}D_{q\overline{q'}}^{\lambda}(g^{-1})\right)d\mu(q)d\mu(q')d\mu(\lambda)=\\
=\frac{1}{\sqrt{\gamma}}\int_{Q\times Q\times J}|f_{\lambda}(\tau)|^{2}\left(\ell_{X}(q',\lambda)\overline{D_{q\overline{q'}}^{\lambda}(g^{-1})}\right)\left(\overline{\ell_{Y}(q',\lambda)}D_{q\overline{q'}}^{\lambda}(g^{-1})\right)\times \nonumber \\
\times d\mu(q)d\mu(q')d\mu(\lambda) \label{vac_03}
\end{gather}
Substituting the functions $D_{q\overline{q'}}^{\lambda}(g^{-1})$
in the form
\begin{gather*}
D_{q\overline{q'}}^{\lambda}(g^{-1})=\int_{Q}D_{q\overline{q''}}^{\lambda}(g^{-1})D_{q''\overline{q'}}^{\lambda}(e)d\mu(q'')
\end{gather*}
then, in view of relations (\ref{condD}), we obtain
\begin{gather}\nonumber
\langle\eta_{X}\hat{\varphi}\eta_{Y}\hat{\varphi}\rangle_{0}=\\
-\frac{1}{\sqrt{\gamma}}\int_{Q\times J}|f_{\lambda}(\tau)|^{2}
\left[\overline{\ell_{X}(q',\lambda)\ell_{Y}(q',\lambda)}D_{q\overline{q'}}^{\lambda}(e)\right]_{q=q'}
d\mu(q')d\mu(\lambda).\label{vv}
\end{gather}
By analogy with \eqref{vv}, we come to the following expression:
\begin{gather}
\langle\eta_{Ad_{g}X}\hat{\varphi}\eta_{Ad_{g}Y}\hat{\varphi}\rangle_{0}=
\langle\xi_{X}\hat{\varphi}\xi_{Y}\hat{\varphi}\rangle_{0}=\nonumber \\ \nonumber
=\frac{1}{\sqrt{\gamma}}\int_{Q\times Q\times J}|f_{\lambda}(\tau)|^{2}\overline{\left(\ell_{X}(q,\lambda)D_{q\overline{q'}}^{\lambda}(e)\right)}
\left(\ell_{Y}(q,\lambda)D_{q\overline{q'}}^{\lambda}(e)\right)\times \\ \times d\mu(q)d\mu(q')d\mu(\lambda).\label{eqq3}
\end{gather}
Summing the equations (\ref{tD}) and substituting in them $g=e$, we get the relation
\begin{equation}
\left(\ell_{X}(q,\lambda)+\overline{\ell_{X}(q',\lambda)}\right)D_{q\overline{q'}}^{\lambda}(e)=0.\label{deltaEQ}
\end{equation}
In view of (\ref{deltaEQ}), for (\ref{eqq3}), we have
\begin{gather}\nonumber
   \langle\eta_{Ad_{g}X}\hat{\varphi}\eta_{Ad_{g}Y}\hat{\varphi}\rangle_{0}=\\ \nonumber
   =\frac{1}{\sqrt{\gamma}}\int_{Q\times Q\times J}|f_{\lambda}(\tau)|^{2}\left(\ell_{X}(q',\lambda)\overline{D_{q\overline{q'}}^{\lambda}(e)}\right)
\left(\overline{\ell_{Y}(q',\lambda)}D_{q\overline{q'}}^{\lambda}(e)\right)d\mu(q)d\mu(q')d\mu(\lambda)=\\ \nonumber
= -\frac{1}{\sqrt{\gamma}}\int_{Q\times J}|f_{\lambda}(\tau)|^{2}
\left[\overline{\ell_{X}(q',\lambda)\ell_{Y}(q',\lambda)}D_{q\overline{q'}}^{\lambda}(e)\right]_{q=q'}
d\mu(q')d\mu(\lambda) = \\ \nonumber
=\langle\eta_{X}\hat{\varphi}\eta_{Y}\hat{\varphi}\rangle_{0}.
\end{gather}
\end{proof}

Equation (\ref{vac_03}) and unitarity of the $\lambda$-representation lead to the following lemma.

\begin{lemma}
The operators $\eta_{X}$ are skew-Hermitian with respect
to the vacuum expectation values:
\begin{equation}
\langle\eta_{X}\hat{\varphi}\eta_{Y}\hat{\varphi}\rangle_{0}=-\langle(\eta_{Y}\eta_{X}\hat{\varphi})\hat{\varphi}\rangle_{0}.\label{vac_eta}
\end{equation}
\end{lemma}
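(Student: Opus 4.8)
The plan is to push the two sides of (\ref{vac_eta}) through the same reduction that turned (\ref{vac_03}) into (\ref{vv}), and then to let the constraint (\ref{deltaEQ}) supply the sign. Starting from the second line of (\ref{vac_03}), in which $\eta_{X},\eta_{Y}$ have already been traded for the $\lambda$-representation operators acting on the $q'$-argument of $D_{q\overline{q'}}^{\lambda}(g^{-1})$, I would carry out the $q$-integration first, keeping the two copies of the $q'$-argument distinct (write them $a$ and $b$, so that $\ell_{X}$ acts on $\overline{D_{q\overline{a}}^{\lambda}(g^{-1})}$ and $\overline{\ell_{Y}}$ on $D_{q\overline{b}}^{\lambda}(g^{-1})$). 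The composition and conjugation laws (\ref{condD}) then give the unitarity relation
\[
\int_{Q}\overline{D_{q\overline{a}}^{\lambda}(g^{-1})}\,D_{q\overline{b}}^{\lambda}(g^{-1})\,d\mu(q)=D_{a\overline{b}}^{\lambda}(e)=\delta(a,\overline{b}).
\]
Pulling the operators outside this $q$-integral and setting $a=b=q'$ afterwards yields
\[
\langle\eta_{X}\hat{\varphi}\eta_{Y}\hat{\varphi}\rangle_{0}=\frac{1}{\sqrt{\gamma}}\int_{Q\times J}|f_{\lambda}(\tau)|^{2}\left[\ell_{X}(q,\lambda)\,\overline{\ell_{Y}(q',\lambda)}\,D_{q\overline{q'}}^{\lambda}(e)\right]_{q=q'}d\mu(q')d\mu(\lambda),
\]
which is of the same form as the reduction (\ref{vv}) used in the proof of Theorem \ref{Th01}.

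Applying the identical reduction to the right-hand side, and iterating the substitution $\eta_{Z}D_{q\overline{q'}}^{\lambda}(g^{-1})=-\overline{\ell_{Z}(q',\lambda)}D_{q\overline{q'}}^{\lambda}(g^{-1})$ already used in (\ref{vac_03}) to get $\overline{\eta_{Y}\eta_{X}D_{q\overline{q'}}^{\lambda}(g^{-1})}=\ell_{X}(q',\lambda)\ell_{Y}(q',\lambda)\overline{D_{q\overline{q'}}^{\lambda}(g^{-1})}$ (both operators now on the single factor $\overline{D}$), I would obtain
\[
\langle(\eta_{Y}\eta_{X}\hat{\varphi})\hat{\varphi}\rangle_{0}=\frac{1}{\sqrt{\gamma}}\int_{Q\times J}|f_{\lambda}(\tau)|^{2}\left[\ell_{X}(q,\lambda)\,\ell_{Y}(q,\lambda)\,D_{q\overline{q'}}^{\lambda}(e)\right]_{q=q'}d\mu(q')d\mu(\lambda),
\]
with both operators acting on the $q$-argument. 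The decisive step is then to compare the two integrands through (\ref{deltaEQ}): substituting $\overline{\ell_{Y}(q',\lambda)}D_{q\overline{q'}}^{\lambda}(e)=-\ell_{Y}(q,\lambda)D_{q\overline{q'}}^{\lambda}(e)$ into the first expression turns it into exactly $-1$ times the second, which is precisely (\ref{vac_eta}).

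The step I expect to be delicate is the simultaneous control of operator ordering and of the coincidence limit $q=q'$. A tempting shortcut — integrating by parts directly in $q'$ at fixed $q$ and invoking skew-Hermiticity of $\ell_{Y}$ (the Hermiticity of $-i\ell_{Y}$ recorded after (\ref{lrpr})) — does transfer $\ell_{Y}$ onto the other factor, but it leaves $\ell_{X}$ and $\ell_{Y}$ in the reversed order, so the two sides would appear to differ by a term proportional to $\ell_{[X,Y]}$ evaluated on the diagonal, a singular quantity whose vanishing is not manifest. Routing the argument instead through the unitarity relation for the $q$-integration keeps the two $q'$-arguments separate until after the operators have acted; then (\ref{deltaEQ}) fixes the sign and the ordering at the same time, and the spurious commutator term never arises. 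Accordingly, I would handle the distributional object $D_{q\overline{q'}}^{\lambda}(e)=\delta(q,\overline{q'})$ strictly through the algebraic relations (\ref{condD}) and (\ref{deltaEQ}) rather than by formal integration by parts.
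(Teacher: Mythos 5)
Your argument is correct and matches the paper's intent: the paper supplies no written proof of this lemma beyond the remark that it follows from (\ref{vac_03}) and unitarity of the $\lambda$-representation, and your reduction via the composition law to the coincidence limit of $D_{q\overline{q'}}^{\lambda}(e)$ followed by an application of (\ref{deltaEQ}) is exactly the mechanism the paper itself uses to pass from (\ref{vac_03}) to (\ref{vv}) in the proof of Theorem \ref{Th01}. Your extra care with operator ordering at the coincidence limit (avoiding a naive integration by parts that would leave a residual $\ell_{[X,Y]}$ term) is a sensible sharpening, but it does not change the route.
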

Note the important property of vacuum expectation values:
\begin{lemma}
The vacuum expectation values
 $\langle\hat{\varphi}\eta_{X}\varphi\rangle_{0}$
on a Lie group $M$ with a bi-invariant metric are zero.
\end{lemma}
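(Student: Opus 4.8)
The plan is to reduce $\langle\hat{\varphi}\eta_{X}\varphi\rangle_{0}$ to the trace of a single $\lambda$-representation operator and then annihilate that trace using semisimplicity of $\mathfrak{g}$. First I would write the average explicitly from its definition and (\ref{class_sol}),
\[
\langle\hat{\varphi}\eta_{X}\varphi\rangle_{0}=\frac{1}{\sqrt{\gamma}}\int_{Q\times Q\times J}|f_{\lambda}(\tau)|^{2}\,\overline{D_{q\overline{q'}}^{\lambda}(g^{-1})}\,\bigl(\eta_{X}D_{q\overline{q'}}^{\lambda}(g^{-1})\bigr)\,d\mu(q)d\mu(q')d\mu(\lambda),
\]
so that $\eta_{X}$ acts only on $D_{q\overline{q'}}^{\lambda}(g^{-1})$ while $|f_{\lambda}|^{2}$ factors out. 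Using the same identity that turns $\eta_{X}D_{q\overline{q'}}^{\lambda}(g^{-1})$ into $\overline{\ell_{X}(q',\lambda)}D_{q\overline{q'}}^{\lambda}(g^{-1})$ already exploited in (\ref{vac_03}), the integrand becomes $\overline{D_{q\overline{q'}}^{\lambda}(g^{-1})}\,\overline{\ell_{X}(q',\lambda)}D_{q\overline{q'}}^{\lambda}(g^{-1})$.

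Next I would repeat, with a single operator, the reduction carried out between (\ref{vac_03}) and (\ref{vv}): insert the composition and completeness relations (\ref{condD}), perform the $q$-integration, and collapse the product of two $D$-functions onto $D_{q\overline{q'}}^{\lambda}(e)=\delta(q,\overline{q'})$ evaluated on the diagonal $q=q'$. This should give
\[
\langle\hat{\varphi}\eta_{X}\varphi\rangle_{0}=-\frac{1}{\sqrt{\gamma}}\int_{J}|f_{\lambda}(\tau)|^{2}\Bigl(\int_{Q}\bigl[\overline{\ell_{X}(q',\lambda)}D_{q\overline{q'}}^{\lambda}(e)\bigr]_{q=q'}d\mu(q')\Bigr)d\mu(\lambda),
\]
where the inner $Q$-integral is precisely the trace over the representation space of the operator $\overline{\ell_{X}(\cdot,\lambda)}$, i.e. $\Sp\overline{\ell_{X}}$, being the coincidence value of its kernel integrated against $d\mu(q')$. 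If convenient, (\ref{deltaEQ}) can first be used to replace $\overline{\ell_{X}(q',\lambda)}$ by $-\ell_{X}(q,\lambda)$ acting on the first index.

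The decisive step is the vanishing of this trace. Because $G$ is compact semisimple, $\mathfrak{g}=[\mathfrak{g},\mathfrak{g}]$, so every $X\in\mathfrak{g}$ can be written as a finite sum $X=\sum_{i}[Y_{i},Z_{i}]$. Since the $\ell_{X}$ furnish a representation of $\mathfrak{g}$, the operator $\ell_{X}$ is (up to a convention-dependent constant) a sum of commutators $[\ell_{Y_{i}},\ell_{Z_{i}}]$, and the trace of each commutator vanishes, whence $\Sp\ell_{X}=\Sp\overline{\ell_{X}}=0$ for every $\lambda$. The $\lambda$- and $\tau$-integrations then give $\langle\hat{\varphi}\eta_{X}\varphi\rangle_{0}=0$.

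The main obstacle is legitimizing the trace operations: the $\ell_{X}$ are unbounded first-order operators on $L_{2}(Q,d\mu)$, so both the coincidence limit defining $\Sp$ and the cyclic identity $\Sp[\ell_{Y},\ell_{Z}]=0$ require justification. For compact semisimple $G$ this is harmless, since the integrality condition (\ref{cel_D}) selects integral orbits whose $\lambda$-representations are the finite-dimensional unitary irreducibles of $G$; there $\Sp$ is an honest finite-dimensional trace, cyclicity holds, and the skew-Hermiticity of $\ell_{X}$ inherited from Hermiticity of $-i\ell_{X}$ already forces $\Sp\ell_{X}$ to be imaginary, consistent with its vanishing. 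Care with the sign and conjugation conventions carried over from (\ref{vac_03}) and (\ref{deltaEQ}) is needed, but these affect only overall factors and not the conclusion.
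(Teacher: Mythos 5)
Your argument is correct, but it reaches the conclusion by a different mechanism than the paper. The paper's proof never reduces the average to a trace: repeating the manipulations of Theorem \ref{Th01} it establishes $\langle\hat{\varphi}\eta_{X}\hat{\varphi}\rangle_{0}=-\langle\hat{\varphi}\xi_{X}\hat{\varphi}\rangle_{0}=\langle\hat{\varphi}\eta_{Ad_{g}X}\hat{\varphi}\rangle_{0}$, reads the left-hand side as a covector $f\in\mathfrak{g}^{*}$ fixed by the whole coadjoint action, and concludes $f=0$. You instead collapse the average orbitwise to the coincidence-limit integral $\int_{Q}[\overline{\ell_{X}(q',\lambda)}D^{\lambda}_{q\overline{q'}}(e)]_{q=q'}d\mu(q')=\Sp\overline{\ell_{X}}$ and kill it with $\mathfrak{g}=[\mathfrak{g},\mathfrak{g}]$ and tracelessness of commutators. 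The two routes rest on the same algebraic fact --- a linear functional annihilating $[\mathfrak{g},\mathfrak{g}]$ on a perfect Lie algebra is zero; $Ad^{*}$-invariance of $f$ is exactly the statement $f([\mathfrak{g},\mathfrak{g}])=0$ --- but they package it differently. The paper's version is global and does not need the $\lambda$-representations to be finite-dimensional, while yours requires the integrality condition (\ref{cel_D}) and compactness to make $\Sp$ and the cyclicity $\Sp[\ell_{Y},\ell_{Z}]=0$ legitimate, a point you correctly flag and resolve. Your version is arguably more careful on one point: the paper asserts that $Ad_{g}^{*}f=f$ for all $g$ forces $f=0$ for any \emph{non-Abelian} $\mathfrak{g}$, which as stated is false (e.g.\ for nilpotent algebras any $f$ vanishing on the derived algebra is invariant); the property actually needed is that $\mathfrak{g}$ is perfect, which your appeal to semisimplicity supplies explicitly.
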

\begin{proof}
	Carrying out the same analysis as in the proof of Theorem \ref{Th01}, we get
\begin{equation}
\langle\hat{\varphi}\eta_{X}\hat{\varphi}\rangle_{0}=-\langle\hat{\varphi}\xi_{X}\hat{\varphi}\rangle_{0}=\langle\hat{\varphi}\eta_{Ad_{g}X}\hat{\varphi}\rangle.\label{sv34}
\end{equation}
We will consider the vacuum expectation values $\langle\hat{\varphi}\eta_{a}\hat{\varphi}\rangle_{0}$
as a covector $f_{a}$. Then equality (\ref{sv34}) can be written
as

\begin{equation}
(Ad_{g}^{*}f)=f.\label{statf}
\end{equation}

The requirement (\ref{statf}) for a non-Abelian Lie algebra $\mathfrak{g}$
can be satisfied for all $g\in G$ only if $f=0$.
\end{proof}

Substituting expression (\ref{Teta01}) into (\ref{vacsr}) yields
\begin{gather*}
\langle\hat{\tilde{T}}(\eta_{0},\eta_{0})\rangle_{0}=\frac{1}{2}a^{2-n}(\tau)\left[\langle\hat{\dot{\varphi}}\hat{\dot{\varphi}}\rangle_{0}+\omega^{2}(\tau)\langle\hat{\varphi}\hat{\varphi}\rangle_{0}\right],\\
\langle\hat{\tilde{T}}(\eta_{0},\eta_{X})\rangle_{0}=\frac{1}{2}a^{2-n}(\tau)\left[\langle\hat{\dot{\varphi}}\eta_{X}\hat{\varphi}\rangle_{0}-\langle\hat{\varphi}\eta_{X}\hat{\dot{\varphi}}\rangle_{0}\right],\\
\langle\hat{\tilde{T}}(\eta_{X},\eta_{Y})\rangle_{0}=a^{2-n}(\tau)\bigg{\{}-\frac{1}{2}\langle\hat{\varphi}\{\eta_{X},\eta_{Y}\}\hat{\varphi}\rangle_{0}-\\
-\left(2\zeta-\frac{1}{2}\right)\gamma(X,Y)\left[\langle\hat{\dot{\varphi}}\hat{\dot{\varphi}}\rangle_{0}-\omega^{2}(\tau)\langle\hat{\varphi}\hat{\varphi}\rangle_{0}\right]\bigg{\}},
\end{gather*}
where property (\ref{vac_eta}) of the vacuum expectation value is taken into account.

From (\ref{vv}) we can find that
\begin{gather}
\langle\hat{\tilde{T}}(\eta_{0},\eta_{0})\rangle_{0}=\frac{1}{2\sqrt{\gamma}}a^{2-n}(\tau)\int_{J}\left(|f_{\lambda}(\tau)|^{2}\omega^{2}(\tau)+|\dot{f}_{\lambda}(\tau)|^{2}\right)\chi(\lambda)d\mu(\lambda), \label{T00} \\ \nonumber
\langle\hat{\tilde{T}}(\eta_{0},\eta_{X})\rangle_{0}=0,\\
\langle\hat{\tilde{T}}(\eta_{X},\eta_{Y})\rangle_{0}=-\frac{1}{\sqrt{\gamma}}a^{2-n}(\tau)\int_{J}(\left[F_{XY}(\lambda)+\zeta\mathbf{R}(X,Y)\chi(\lambda)\right]|f_{\lambda}(\tau)|^{2}+\label{Tab}\\
+\left(2\zeta-\frac{1}{2}\right)\gamma(X,Y)\left(|\dot{f}_{\lambda}(\tau)|^{2}-|f_{\lambda}(\tau)|^{2}\omega^{2}(\tau)\right)\chi(\lambda)),\nonumber
\end{gather}
where $\chi(\lambda)=\int_{Q}D_{q\overline{q}}^{\lambda}(e)d\mu(q)$
is the character of the $\lambda$-representation in the unit of
the Lie group $G,$
\begin{gather*}
F_{XY}(\lambda)=\frac{1}{2}\int_{Q}\left[\{\ell_{X}(q'),\ell_{Y}(q')\}_{+}D_{q\overline{q'}}^{\lambda}(e)\right]_{q'=q}d\mu(q).
\end{gather*}

\begin{lemma}
The symmetric 2-form $F_{XY}(\lambda)$ on $\mathfrak{g}$
is determined by the character of the $\lambda$-representation:
\begin{gather}
F_{XY}(\lambda)=-\frac{\Lambda_{\lambda}^{2}}{\mathrm{dim}G}\chi(\lambda)\gamma(X,Y).\label{condF2}
\end{gather}
\end{lemma}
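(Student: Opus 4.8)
The plan is to show that, for each fixed $\lambda$, the symmetric form $F_{XY}(\lambda)$ is an $Ad$-invariant symmetric bilinear form on $\mathfrak{g}$, and then to pin down its single free constant by contracting with the inverse Killing form. By construction $F_{XY}(\lambda)$ is symmetric in $X,Y$, since it is built from the anticommutator $\{\ell_X,\ell_Y\}_{+}$. The first step is $Ad_g$-invariance, $F_{Ad_gX,Ad_gY}(\lambda)=F_{XY}(\lambda)$: this follows by repeating the manipulations in the proof of Theorem~\ref{Th01} — substituting the group law for $D_{q\overline{q'}}^{\lambda}(g^{-1})$, using the gluing and symmetry relations \eqref{condD}, and passing to the coincidence limit $q'=q$ — but now carried out inside the $\lambda$-integrand, i.e. at fixed $\lambda$, so that the invariance holds level by level in $\lambda$ rather than only after integration over $J$.

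Next I would invoke the representation-theoretic rigidity: since $\mathfrak{g}$ is simple (as in the $SO(3)$ example of Section~\ref{sec:SO3}), the space of $Ad$-invariant symmetric bilinear forms on $\mathfrak{g}$ is one-dimensional and spanned by the Killing form $\gamma$. Consequently $F_{XY}(\lambda)=c(\lambda)\,\gamma(X,Y)$ for some scalar $c(\lambda)$ that still has to be computed.

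To determine $c(\lambda)$ I contract both sides with the inverse metric $\gamma^{ab}$. On the left, $\gamma^{ab}\{\ell_a,\ell_b\}_{+}=2\gamma^{ab}\ell_a\ell_b$, and by \eqref{lrpr} the Casimir combination $H(-i\ell(q,\lambda))=-\gamma^{ab}\ell_a\ell_b$ is the constant $\Lambda_{\lambda}^{2}$, so $\gamma^{ab}\ell_a\ell_b=-\Lambda_{\lambda}^{2}\,\mathrm{Id}$ acting on $D_{q\overline{q'}}^{\lambda}(e)$. Hence $\gamma^{ab}F_{ab}(\lambda)=-\Lambda_{\lambda}^{2}\int_{Q}D_{q\overline{q}}^{\lambda}(e)\,d\mu(q)=-\Lambda_{\lambda}^{2}\chi(\lambda)$, where the $q$-integral of the coincidence kernel is precisely the character $\chi(\lambda)$. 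On the right, $\gamma^{ab}\gamma_{ab}=\dim G$, so $\gamma^{ab}F_{ab}(\lambda)=c(\lambda)\dim G$. Equating the two gives $c(\lambda)=-\Lambda_{\lambda}^{2}\chi(\lambda)/\dim G$, which is exactly \eqref{condF2}.

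The main obstacle is the first step: one must check carefully that the coincidence-limit prescription and the complex conjugations implicit in the $\overline{q'}$ index behave under $Ad_g$ exactly as in Theorem~\ref{Th01} when the computation is frozen at a single $\lambda$, with no boundary contributions from the integrations by parts that convert $\eta_X$-derivatives into $\ell_X(q')$-operators. The appeal to uniqueness of the invariant form requires $\mathfrak{g}$ simple; for a genuinely semisimple $\mathfrak{g}$ with several simple ideals the representation trace form need not be a single multiple of $\gamma$, and the statement would have to be read factor by factor. Once invariance and simplicity are granted, the contraction step is a short, purely algebraic computation.
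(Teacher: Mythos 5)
Your proposal follows essentially the same route as the paper: establish $Ad_g$-invariance of $F_{XY}(\lambda)$ at fixed $\lambda$ by the same manipulations as in Theorem~1, invoke uniqueness (up to scale) of the invariant symmetric form to write $F_{XY}(\lambda)=f(\lambda)\gamma(X,Y)$, and fix $f(\lambda)$ by contracting with $\gamma^{ab}$ and using the Casimir eigenvalue $\Lambda_\lambda^{2}$ together with the character $\chi(\lambda)$. Your closing caveat that uniqueness of the invariant form strictly requires $\mathfrak{g}$ simple (and would need a factor-by-factor reading for a product of simple ideals) is a legitimate refinement that the paper passes over silently, but it does not change the argument.
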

\begin{proof}
The $Ad_{g}$- invariance property of the vacuum expectation values (\ref{futv1})
implies the $Ad_{g}-$ invariance of the symmetric 2-form $F_{XY}(\lambda)$:
\begin{gather}
F_{(Ad_{g}X)(Ad_{g}Y)}(\lambda)=F_{XY}(\lambda),\quad g\in G,\quad X,Y\in\mathfrak{g}.\label{AdF}
\end{gather}
A symmetric 2-form satisfying (\ref{AdF}) is known to define a bi-invariant
metric on a Lie group $G$. From the uniqueness of a bi-invariant
metric on a Lie group $G$ it follows that $F_{XY}(\lambda)=f(\lambda)\gamma(X,Y)$
where $f(\lambda)$ is a function of the parameter $\lambda$. Then, $\gamma^{ab}F_{ab}(\lambda)=\mathrm{dim}G\cdot f(\lambda)$. To find $f(\lambda)$, we find the convolution of $F_{XY}(\lambda)$
with the metric tensor $\gamma(X,Y)$:
\begin{gather*}
  \gamma^{ab}F_{ab}(\lambda)=-\int_{Q}H(-i\ell(q'))D_{q\overline{q'}}^{\lambda}(e)\mid_{q'=q}d\mu(q)=
  -\Lambda_{\lambda}^{2}\chi(\lambda) = \mathrm{dim}G\cdot f(\lambda).
\end{gather*}
Then we obtain (\ref{condF2}).
\end{proof}

Substituting (\ref{condF2}) into (\ref{Tab}) and taking into account
(\ref{RabeqR}), we get
\begin{gather}
\langle\hat{\tilde{T}}(\eta_{X},\eta_{Y})\rangle_{0}=\frac{\gamma(X,Y)}{n-1}\left(\langle\hat{\tilde{T}}(\eta_{0},\eta_{0})\rangle_{0}-a^{2}(\tau)\langle Sp(\hat{T})\rangle_{0}\right),\label{Tab2}
\end{gather}
where
\begin{equation}
\langle Sp(\hat{T})\rangle_{0}=G^{AB}\langle\hat{\tilde{T}}(\eta_{A},\eta_{B})\rangle_{0}=a^{2-n}(\tau)\frac{m^{2}}{\sqrt{\gamma}}\int_{J}|f_{\lambda}(\tau)|^{2}\chi(\lambda)d\mu(\lambda).\label{SpTunren}
\end{equation}
From (\ref{T00}) and (\ref{Tab2}) it follows that the vacuum expectation values of the EMT for a scalar field on the Lie group $\tilde{M}$ with a bi-invariant metric is determined by the character of the $\lambda$-representation
in the unit of the group.

Consider the adiabatic regularization of the vacuum expectation values
(\ref{Tab2})\textendash (\ref{SpTunren}) according to \cite{tarman,bunch,fulling,parker}.

To this end, we substitute expression \eqref{W4} for $W_{(4)}(\tau)$ into equation (\ref{Tab2}) and using (\ref{ffW}), we can obtain the following adiabatic vacuum contributions for the case of a commutative $ (n-1) $ -dimensional Lie group $ G = \mathbb {R}^{n-1} $ \cite{bunch}:
\begin{align*}
\langle\hat{\tilde{T}}(\eta_{0},\eta_{0})\rangle_{ad} & =
 \frac{1}{B_n}a^{2-n}(\tau)\int_0^\infty \lambda^{n-2}t_{0}[a,\lambda]d\lambda,\\
\langle Sp(\hat{T})\rangle_{ad} & =
\frac{1}{B_n}a^{-n}(\tau)\int_0^\infty \lambda^{n-2} t_{1}[a,\lambda]d\lambda,
\end{align*}

\begin{gather*}
	t_{0}[a,\lambda] =\omega+\frac{m^{4}a^{4}}{8\omega^{5}}\frac{\dot{a}^{2}}{a^{2}}-\frac{m^{4}a^{4}}{32\omega^{7}}\left(2\frac{{a}^{(3)}\dot{a}}{a^{2}}-\frac{\ddot{a}^{2}}{a^{2}}+4\frac{\ddot{a}\dot{a}^{2}}{a^{3}}-\frac{\dot{a}^{4}}{a^{4}}\right)+\\ \nonumber	+\frac{7m^{6}a^{6}}{16\omega^{9}}\left(\frac{\ddot{a}\dot{a}^{2}}{a^{3}}+\frac{\dot{a}^{4}}{a^{4}}\right)-\frac{105m^{8}a^{8}}{128\omega^{11}}\frac{\dot{a}^{4}}{a^{4}},
\end{gather*}
\begin{gather*}
	t_{1}[a,\lambda] =\frac{m^{2}a^{2}}{\omega}+\frac{m^{4}a^{4}}{4\omega^{5}}\left(\frac{\ddot{a}}{a}+\frac{\dot{a}^{2}}{a^{2}}\right)-\frac{5m^{6}a^{6}}{8\omega^{7}}\frac{\dot{a}^{2}}{a^{2}}-\\ \nonumber	-\frac{m^{4}a^{4}}{16\omega^{7}}\left(\frac{{a}^{(4)}}{a}+4\frac{{a}^{(3)}\dot{a}}{a^{2}}+3\frac{\ddot{a}^{2}}{a^{2}}\right)+\frac{m^{6}a^{6}}{32\omega^{9}}\left(28\frac{{a}^{(3)}\dot{a}}{a^{2}}+126\frac{\ddot{a}\dot{a^{2}}}{a^{3}}+  21\frac{\ddot{a}^{2}}{a^{2}}+21\frac{\dot{a}^{4}}{a^{4}}\right)-\\ \nonumber
  -\frac{231m^{8}a^{8}}{32\omega^{11}}\left(\frac{\ddot{a}\dot{a}^{2}}{a^{3}}+\frac{\dot{a}^{4}}{a^{4}}\right)+\frac{1155m^{10}a^{10}}{128\omega^{13}}\frac{\dot{a}^{4}}{a^{4}},
\end{gather*}
where $\omega^2 = \lambda^2 + a^2(\tau)m^2$, $B_n = 2^{n-1}\pi^{(n-1)/2}\Gamma((n-1)/2)$.

Within the framework of the adiabatic regularization, the renormalized final values of the vacuum
expectation value of the EMT are determined
by subtracting the adiabatic vacuum contributions from (\ref{T00})
and (\ref{SpTunren}), which do not depend on the global spatial topology \cite{and87}:
\begin{align}
\langle\hat{\tilde{T}}(\eta_{0},\eta_{0})\rangle_{ren} & =\langle\hat{\tilde{T}}(\eta_{0},\eta_{0})\rangle_{unren}-\frac{1}{B_n}a^{2-n}(\tau)\int_0^\infty \lambda^{n-2} t_{0}[a,\lambda]d\lambda,\nonumber \\
\langle Sp(\hat{T})\rangle_{ren} & =\langle Sp(\hat{T})\rangle_{unren}-\frac{1}{B_n}a^{-n}(\tau)\int_0^\infty \lambda^{n-2} t_{1}[a,\lambda]d\lambda,\label{Tren}
\end{align}
Note that the adiabatic regularization is not a regularization method for divergent integrals. Expressions (\ref{Tren}) consist of formally divergent integrals, and, in principle, to give them a mathematical
meaning, one needs to enter some covariant circumcision. There are two covariant methods that can be used to regularize vacuum expectation values: the covariant splitting of points and the dimensional regularization.
Then it is necessary to carry out a subtraction in (\ref{Tren}) and remove the regularization.

\section{Particle creation\label{sec:Particle-creation}}

Using the variables $\varphi$ and $\overline{\varphi}$ that satisfy
the equation of motion (\ref{kg01}), we construct the canonical Hamiltonian
of the scalar field:
\begin{equation}
	H(\tau)=\frac{1}{2}\sqrt{\gamma}\int_{G}\left\{ \overline{\dot{\varphi}}\dot{\varphi}+\gamma^{AB}\overline{\eta_{A}\varphi}\eta_{B}\varphi+\left(m^{2}a^{2}(\tau)+\zeta R\right)\overline{\varphi}\varphi\right\} d\mu(g).\label{mham}
\end{equation}

In order to diagonalize the Hamiltonian (\ref{mham}), one needs to expand
the field $\varphi$ into the complete basis of solutions ($\Phi_{J}(g)$)
to the equation (\ref{kg01}) with a set of quantum numbers $J$.
These functions satisfy the condition: there exists another set $\overline{J}$ of quantum numbers such that
\begin{equation}
\overline{\Phi_{J}(g)}=\theta_{J}\Phi_{\overline{J}}(g),\quad\left|\theta_{J}\right|=1.\label{condtheta}
\end{equation}
Note that the functions $\Phi_{\sigma}(g)$ do not possess the property
(\ref{condtheta}) in general, since the set of
quantum numbers $\sigma$ may contain the complex values of $q$ and $q'$.

We construct the necessary set of solutions as follows. Denote by
$A_{a}=A_{a}(f)$ a set of $\mathrm{dim}\mathcal{O}_{\lambda}/2$
independent functions in involution on the orbits of
$\mathcal{O}_{\lambda}$. Such a set for a compact semisimple Lie
group can be constructed by shifting the argument \cite{Bols} in the class of homogeneous polynomials.
The degree of homogeneity of a polynomial $A_{a}(f)$ is denoted
by $k_{a}\in\mathbb{N}$.

Denote by $\hat{A}_{a}=A_{a}(-i\xi)$ the maximal set of self-adjoint
pairwise commuting operators in the left-invariant enveloping algebra
of the Lie group $G$, and $\hat{B}_{a}=A_{a}(i\eta)$ is the maximum
set of self-adjoint pairwise commuting operators in a right-invariant
enveloping algebra of the Lie group $G$. The set of operators $\{\hat{A}_{a},\hat{B}_{b},\hat{K}_{\mu}\}$
of the dimension $\mathrm{dim}\mathfrak{g}$ forms a complete set of operators
on a Lie group $G$ \cite{barut}.
We will seek a basis
of solutions $\Phi_{J}(g)$ to equation (\ref{kf}) as a set of eigenfunctions
for a complete set of operators:
\begin{align}
\hat{A}_{a}\Phi_{J}(g) & =p_{a}\Phi_{J}(g),\nonumber \\
\hat{B}_{b}\Phi_{J}(g) & =s_{b}\Phi_{J}(g),\nonumber \\
\hat{K}_{\mu}\Phi_{J}(g) & =\omega_{\mu}(\lambda)\Phi_{J}(g),\label{sysps}
\end{align}
where $J=\{p,s,\lambda\}$ is a set of quantum numbers. The multindex
$p=\{p_{1},\dots,p_{\mathrm{dim}\mathcal{O}_{\lambda}/2}\}$ corresponds
to the set of eigenvalues of the $\hat{A}_{a}$, and $s=\{s_{1},\dots,s_{\mathrm{dim}\mathcal{O}_{\lambda}/2}\}$
corresponds to the set of eigenvalues of the $\hat{B}_{a}$ operators.
We impose the normalization condition
\[
\sqrt{\gamma}\int_{G}\overline{\Phi_{J}(g)}\Phi_{J'}(g)d\mu(g)=\delta(J,J')\text{.}
\]
We will seek a solution to the system (\ref{sysps}) in the form
\begin{equation}
\Phi_{J}(g)=\Phi_{ps}^{\lambda}(g)=\int\overline{\psi_{s}(q)}\psi_{p}(q')\Phi_{\sigma}(g)d\mu(q)d\mu(q').\label{rrpsi}
\end{equation}
Then from (\ref{tD}) we obtain
\begin{align*}
A_{a}(\ell(q',\lambda))\psi_{p}(q') & =p_{a}\psi_{p}(q'),\\
A_{a}(\ell(q,\lambda))\psi_{s}(q) & =s_{a}\psi_{s}(q).
\end{align*}
The functions $\psi_{p}(q')$ and $\psi_{s}(q)$ satisfy
the orthogonality and completeness relations
\begin{align}
	\nonumber
(\psi_{p},\psi_{p'}) & =\delta_{pp'},\quad(\psi_{s},\psi_{s'})=\delta_{ss'},\\
\sum_{p}\overline{\psi_{p}(\widetilde{q}')}\psi_{p}(q') & =\sum_{s}\overline{\psi_{s}(\widetilde{q}')}\psi_{s}(q')=\delta(\widetilde{q}'-q').\label{psifo}
\end{align}
Expression (\ref{rrpsi}) provides a correspondence between the basis
of solutions $\Phi_{J}(g)$ that are eigenfunctions for the complete
set of operators and the basis of solutions $\Phi_{\sigma}(g)$.

The complex conjugation of equality (\ref{rrpsi}) and the unitarity
of the $\lambda$-representation result in the expression
\begin{align*}
\overline{\Phi_{ps}^{\lambda}(g)} & =\gamma^{-1/4}\int\overline{\psi_{p}(q')}\psi_{s}(q)\overline{D_{q\overline{q}'}^{\lambda}(g^{-1})}d\mu(q)d\mu(q')=\Phi_{sp}^{\lambda}(g^{-1}).
\end{align*}
Whence it follows that $\overline{\Phi_{ps}^{\lambda}(g)}$ is also
an eigenfunction of the complete set of operators $\{\hat{A}_{a},\hat{B}_{b},\hat{K}_{\mu}\}$:
\begin{align*}
A_{a}(-i\xi(g))\overline{\Phi_{ps}^{\lambda}(g)} & =\left.A_{a}(-i\eta(\overline{g}))\Phi_{sp}^{\lambda}(\overline{g})\right|_{\overline{g}=g^{-1}}=\\
 & =(-1)^{k_{a}}\left.A_{a}(i\eta(\overline{g}))\Phi_{sp}^{\lambda}(\overline{g})\right|_{\overline{g}=g^{-1}}=\\
 & =(-1)^{k_{a}}p_{a}\overline{\Phi_{ps}^{\lambda}(g)}=\overline{p}_{a}\overline{\Phi_{ps}^{\lambda}(g)},
\end{align*}
\begin{align*}
A_{b}(i\xi(g))\overline{\Phi_{ps}^{\lambda}(g)} & =\left.A_{b}(i\xi(\overline{g}))\Phi_{sp}^{\lambda}(\overline{g})\right|_{\overline{g}=g^{-1}}=\\
 & =(-1)^{k_{b}}\left.A_{b}(-i\xi(\overline{g}))\Phi_{sp}^{\lambda}(\overline{g})\right|_{\overline{g}=g^{-1}}=\\
 & =(-1)^{k_{b}}s_{b}\overline{\Phi_{ps}^{\lambda}(g)}=\overline{s}_{b}\overline{\Phi_{ps}^{\lambda}(g)},
\end{align*}
\begin{align*}
K_{\mu}(-i\xi)\overline{\Phi_{ps}^{\lambda}(g)} & =\left.K_{\mu}(-i\eta(\overline{g}))\Phi_{sp}^{\lambda}(\overline{g})\right|_{\overline{g}=g^{-1}}=\\
 & =\omega_{\mu}(\lambda)\left.\Phi_{sp}^{\lambda}(\overline{g})\right|_{\overline{g}=g^{-1}}=\\
 & =\omega_{\mu}(\lambda)\overline{\Phi_{ps}^{\lambda}(g)}.
\end{align*}
Therefore, the function $\overline{\Phi_{ps}^{\lambda}(g)}$ is proportional
to the function $\Phi_{\overline{p}\overline{s}}^{\lambda}(g)$. The
proportionality factor is
equal to unit in modulus according
to the normalization condition.
In other words, the condition (\ref{condtheta}) is fulfilled, and
\[
J=\{p,s,\lambda\},\quad\overline{J}=\{\overline{p},\overline{s},\lambda\},\overline{p}_{a}=(-1)^{k_{a}}p_{a},\quad\overline{s}_{b}=(-1)^{k_{b}}s_{b}.
\]
 Let us expand the field $\varphi$ in terms of a complete system (\ref{rrpsi})
of solutions to equation (\ref{kg01}) numbered by the quantum numbers
$J$ and $\overline{J}$:
\begin{align*}
\varphi & =\int d\mu(J)\left[\varphi_{\overline{J}}^{(+)}a_{\overline{J}}^{(+)}+\varphi_{J}^{(-)}a_{J}^{(-)}\right]=\varphi^{(+)}+\varphi^{(-)},\\
\overline{\varphi} & =\int d\mu(J)\left[\varphi_{\overline{J}}^{(-)}a_{\overline{J}}^{(-)\dagger}+\varphi_{J}^{(+)}a_{J}^{(+)\dagger}\right],
\end{align*}
where $\varphi_{J}^{(+)}(\tau,g)=f_{\lambda}(\tau)\Phi_{J}(g),\quad\varphi_{J}^{(-)}(\tau,g)=\overline{\varphi_{J}^{(+)}(\tau,g)}.$
Impose the canonical commutation relations
\[
\left[a_{J}^{(-)},a_{J'}^{(+)\dagger}\right]=\left[a_{J}^{(-)\dagger},a_{J'}^{(+)}\right]=\delta(J,J'),\quad\left[a_{J}^{(\pm)},a_{J'}^{(\pm)}\right]=\left[a_{J}^{(\pm)\dagger},a_{J'}^{(\pm)\dagger}\right]=0.
\]
The Hamiltonian $H(\tau)$ in terms of the
creation and annihilation operators reads
\begin{gather}\nonumber
H(\tau) =\int d\mu(J)\bigg{\{} E_{J}(\tau)\left(a_{J}^{(+)\dagger}a_{J}^{(-)}+a_{\overline{J}}^{(-)\dagger}a_{\overline{J}}^{(+)}\right)+ \\ \nonumber
	+ F_{J}(\tau)a_{J}^{(+)\dagger}a_{\overline{J}}^{(+)}+\overline{F_{J}(\tau)}a_{\overline{J}}^{(-)\dagger}a_{J}^{(-)}\bigg{\}},\\ \label{Hbb}
E_{J}(\tau) =\left|\dot{f_{\lambda}}\right|+\omega^{2}\left|f_{\lambda}\right|,\quad F_{J}(\tau)=\overline{\theta_{J}}\left(\dot{f}_{\lambda}^{2}+\omega^{2}f_{\lambda}^{2}\right).
\end{gather}

Impose the initial conditions on the functions $f_{\lambda}(\tau)$:
\[
\dot{f}_{\lambda}(\tau_{0})=i\omega(\tau_{0})f_{\lambda}(\tau_{0}),\quad\left|f_{\lambda}(\tau_{0})\right|=(2\omega)^{-1/2}(\tau_{0}).
\]
Then $F_{J}(\tau_{0})=0$ and the Hamiltonian (\ref{Hbb})
is diagonal at the initial moment $\tau=\tau_{0}$ with respect to the
operators $a_{J}^{(\pm)}$ and $a_{J}^{(\pm)\dagger}$.

To diagonalize the Hamiltonian at an arbitrary instant $\tau$, we introduce
the operators $c_{J}^{(\pm)}$ and $c_{J}^{(\pm)\dagger}$ related
to the operators $a_{J}^{(\pm)}$ and $a_{J}^{(\pm)\dagger}$ by the
Bogolyubov canonical transformation:
\[
\begin{cases}
a_{J}^{(-)} & =\overline{\alpha_{J}(\tau)}c_{J}^{(-)}-\beta_{J}(\tau)\theta_{J}c_{\overline{J}}^{(+)},\\
a_{J}^{(+)} & =\alpha_{J}(\tau)c_{J}^{(+)}-\overline{\beta_{J}(\tau)\theta_{J}}c_{\overline{J}}^{(-)}.
\end{cases}
\]
For the adjoint operators, we have, respectively:
\[
\begin{cases}
a_{J}^{(-)\dagger} & =\overline{\alpha_{J}(\tau)}c_{J}^{(-)\dagger}-\beta_{J}(\tau)\theta_{J}c_{\overline{J}}^{(+)\dagger},\\
a_{J}^{(+)\dagger} & =\alpha_{J}(\tau)c_{J}^{(+)\dagger}-\overline{\beta_{J}(\tau)\theta_{J}}c_{\overline{J}}^{(-)\dagger}\text{,}
\end{cases}
\]
where the functions $\alpha_{J}(\tau)=\alpha_{\overline{J}}(\tau)$,
$\beta_{J}(\tau)=\beta_{\overline{J}}(\tau)$ satisfy the initial
conditions $\left|\alpha_{J}(\tau_{0})\right|=1$, $\beta_{J}(\tau_{0})=0$
and the relation $\left|\alpha_{J}(\tau)\right|^{2}-\left|\beta_{J}(\tau)\right|^{2}=1$.
The inverse transformations are
\begin{align}
\begin{cases}
c_{J}^{(-)} & =\alpha_{J}(\tau)a_{J}^{(-)}+\beta_{J}(\tau)\theta_{J}a_{\overline{J}}^{(+)},\\
a_{J}^{(+)} & =\overline{\alpha_{J}(\tau)}a_{J}^{(+)}+\overline{\beta_{J}(\tau)\theta_{J}}a_{\overline{J}}^{(-)},
\end{cases}\label{binv}\\ \nonumber
\begin{cases}
c_{J}^{(-)\dagger} & =\alpha_{J}(\tau)a_{J}^{(-)\dagger}+\beta_{J}(\tau)\theta_{J}a_{\overline{J}}^{(+)\dagger},\\
c_{J}^{(+)\dagger} & =\overline{\alpha_{J}(\tau)}a_{J}^{(+)\dagger}+\overline{\beta_{J}(\tau)\theta_{J}}a_{\overline{J}}^{(-)\dagger}\text{.}
\end{cases}
\end{align}
Then for the Hamiltonian, we get the expression
\begin{gather}\nonumber
	H(\tau) = \int d\mu(J)\bigg{\{} \widetilde{E}_{J}(\tau)\left(c_{J}^{(+)\dagger}c_{J}^{(-)}+c_{\overline{J}}^{(-)\dagger}c_{\overline{J}}^{(+)}\right)+\\ \nonumber
	+\widetilde{F}_{J}(\tau)c_{J}^{(+)\dagger}c_{\overline{J}}^{(+)}+\overline{\widetilde{F}_{J}(\tau)}c_{\overline{J}}^{(-)\dagger}c_{J}^{(-)}\bigg{\}},\\ \nonumber
\widetilde{E}_{J}(\tau) = E_{J}(\tau)\left(\left|\alpha_{J}(\tau)\right|^{2}+\left|\beta_{J}(\tau)\right|^{2}\right)-2\mathrm{Re}\left(F_{J}(\tau)\alpha_{J}(\tau)\overline{\beta_{J}(\tau)\theta_{J}}\right),\nonumber \\
\widetilde{F}_{J}(\tau) =-2\alpha_{J}(\tau)\beta_{J}(\tau)\theta_{J}E_{J}(\tau)+\alpha^{2}(\tau)F_{J}(\tau)+\beta_{J}^{2}(\tau)\theta_{J}^{2}\overline{F_{J}(\tau)}. \nonumber
\end{gather}
The condition $\widetilde{F}_{J}(\tau)=0$ of diagonalization of the
Hamiltonian at the moment $\tau$ with respect to the operators $\{c_{J}^{(\pm)}\text{, }c_{J}^{(\pm)\dagger}\}$
is compatible with the normalization condition (\ref{eq:normf}) only
if $\omega^{2}(\tau)>0$. This is equivalent to the requirement $k<0.$

From equation $\widetilde{F}_{J}(\tau)=0$, we obtain
\begin{gather}\nonumber
\alpha_{J}(\tau)=\frac{i}{\sqrt{2\omega(\tau)}}\chi_{J}(\tau)\left(\overline{\dot{f}_{\Lambda}(\tau)}-i\omega(\tau)\overline{f_{\Lambda}(\tau)}\right),\\ \nonumber
\quad\beta_{J}(\tau)=\frac{i}{\sqrt{2\omega(\tau)}}\chi_{J}(\tau)\left(\dot{f}_{\Lambda}(\tau)-i\omega(\tau)f_{\Lambda}(\tau)\right),
\end{gather}
where $\chi_{J}(\tau)=\chi_{\overline{J}}(\tau)$ is an arbitrary
complex function such that $\left|\chi_{J}(\tau)\right|=1$. It is convenient to modify the operators $\{c_{J}^{(\pm)}\text{, }c_{J}^{(\pm)\dagger}\}$ by
\[
d_{J}^{(+)}=\chi_{J}(\tau)c_{J}^{(+)},\quad d_{J}^{(-)}=\overline{\chi_{J}(\tau)}c_{J}^{(-)},
\]
where the operators $\{d_{J}^{(+)},d_{J}^{(-)}\}$ satisfy the same commutation relations as the original operators $\{c_{J}^{(\pm)}\text{, }c_{J}^{(\pm)\dagger}\}$.

Then the Hamiltonian $H(\eta)$ is diagonal with respect to the operators $\{d_{J}^{(+)},d_{J}^{(-)}\}$:
\[
H(\eta)=\int d\mu(J)\omega(\eta)\left(d_{J}^{(+)\dagger}d_{J}^{(-)}+d_{J}^{(-)\dagger}d_{J}^{(+)}\right).
\]
Suppose the quantized scalar field at the initial moment $\eta_{0}$
is in the state $\mid0\rangle$, which is annihilated by the operators
$\{a_{J}^{(\pm)}\text{, }a_{J}^{(\pm)\dagger}\}$. At the moment $\tau>\tau_{0}$
the vacuum state is defined as follows:
\[
d_{J}^{(-)}\mid0_{\tau}\rangle=d_{J}^{(-)\dagger}\mid0_{\tau}\rangle=0.
\]
In the Heisenberg picture, the state $\mid0\rangle$ is not vacuum one
subject to $\tau>\tau_{0}$. Using the inverse transformations (\ref{binv}),
we can easily find that in each mode $J$ this state contains $n_{J}(\tau)$
pairs of quasiparticles with quantum numbers $J$ and $\overline{J}$, where
\[
n_{J}(\tau)=\langle0\mid d_{J}^{(+)\dagger}d_{J}^{(-)}\mid0\rangle=\langle0\mid d_{J}^{(+)}d_{J}^{(-)\dagger}\mid0\rangle=\left|\beta_{J}(\tau)\right|^{2}.
\]
 The density of created particles is defined as the vacuum expectation values
relative to the instantaneous vacuum $\mid0_{\tau}\rangle$ of the
particle density operator,
\begin{gather}\nonumber
n(\tau,g)=\langle 0_{\tau}\mid\hat{n}(\tau,g)\mid0_{\tau}\rangle,\\ \nonumber
n(\tau,g)=-i\sqrt{\gamma}a^{1-n}(\tau)\left((\partial_{\tau}\varphi^{(+)})^{\dagger}\varphi^{(-)}-\varphi^{(+)\dagger}\partial_{\tau}\varphi^{(-)}\right),
\end{gather}
where $\varphi=\varphi^{(+)}+\varphi^{(-)}$ is expansion of the field
operator into positive- and negative-frequency parts. Then we have
\begin{gather*}
	n(\tau,g) = -i\sqrt{\gamma}a^{1-n}(\tau)\int d\mu(J)d\mu(J')\bigg{\{} (\dot{f}_{\lambda}\overline{f}_{\lambda'}-f_{\lambda}\dot{\overline{f}}_{\lambda'})\times \\
	\times \overline{\Phi_{J'}(g)}\Phi_{J}(g)\langle0_{\tau}\mid a_{J}^{(+)\dagger}a_{J'}^{(-)}\mid0_{\tau}\rangle\bigg{\}} =\\
  = \sqrt{\gamma}a^{1-n}(\tau)\int\left|\beta_{\lambda}(\tau)\Phi_{J}(g)\right|^{2}d\mu(J).
\end{gather*}
Simplifying this expression with the use of expansions (\ref{rrpsi})
and relations (\ref{psifo}) gives
\begin{gather*}
\sqrt{\gamma}\int\left|\beta_{\lambda}(\tau)\Phi_{J}(g)\right|^{2}d\mu(J)  = \\ \nonumber
=\sqrt{\gamma}\sum_{s,p}\int\left|\beta_{\lambda}(\tau)\right|^{2}\left(\int\psi_{s}(q)\overline{\psi_{p}(q')}\overline{\Phi_{\sigma}(g)}d\mu(q)d\mu(q')\right)\times \\ \nonumber
\times\left(\int\overline{\psi_{s}(\widetilde{q})}\psi_{p}(\widetilde{q}')\Phi_{\widetilde{\sigma}}(g)d\mu(\widetilde{q})d\mu(\widetilde{q}')\right)_{\widetilde{\lambda}=\lambda}d\mu(\lambda)=\\ \nonumber
= \sqrt{\gamma}\int\left|\beta_{\lambda}(\tau)\Phi_{\sigma}(g)\right|^{2}d\mu(q)d\mu(q')d\mu(\lambda)
= \int\left|\beta_{\lambda}(\tau)\right|^{2}\chi(\lambda)d\mu(\lambda).
\end{gather*}
Thus, the density of created pairs does not depend on the group coordinates.
It is determined by the scale factor $a(\tau)$ and the character of
the $\lambda$-representation in the identity element of the group $G$:
\begin{equation}
n(\tau)=a^{1-n}(\tau)\int\left|\beta_{\lambda}(\tau)\right|^{2}\chi(\lambda)d\mu(\lambda)\text{.}\label{ntt}
\end{equation}
The expression (\ref {ntt}) depends on the topology of the Lie group $G$, and the integral over the quantum numbers $\lambda$ is determined only by integer orbits.

\section{Vacuum expectation values of the energy-momentum tensor for a scalar field on
$\mathbb{R}\times SO(3)$\label{sec:SO3}}

Consider the three-dimensional rotation group $SO(3)$ as the Lie
group $G$. Fix some basis $\{e_{a}\}$ of the Lie algebra $\mathfrak{so}(3)$:
\[
   [e_1,e_2] = e_3,\quad [e_3,e_1] = e_2,\quad [e_2,e_3] = e_1.
\]
A bi-invariant metric on $SO(3)$ is given by the 2-form $\gamma_{ab}=-\mathrm{2k\,diag}(1,1,1)$, \, $a,b=1,2,3$.
Without loss of generality, we set $k=-1/2$.

let us define a metric on the space-time $\tilde{M}$ in local coordinates as
\[
ds^{2}=a^{2}(\eta)\left(d\eta^{2}-dl^{2}\right),\quad dl^{2}=\left(d\phi^{2}+d\theta^{2}+2cos(\theta)d\phi d\psi+d\psi^{2}\right),
\]
where $g=(\phi,\theta,\psi)\in G$ are Euler angles, $\phi\in[0;2\pi),\quad\theta\in[0;\pi),\quad\psi\in[0;2\pi)$.
The bi-invariant metric $dl^{2}$ on the Lie group $SO(3)$ is a metric
of a three-dimensional sphere of radius $r=1$ and thus it coincides
with the metric of the closed Friedmann cosmological model.

Unlike the Friedmann model, the space in this case has
the topology of the projective space $\rm{PR}_{3}$ \cite{barut}. The
Ricci tensor and the scalar curvature of the manifold $M=\mathbb{R}\times SO(3)$
have the form:
\[
R_{ab}=\mathrm{diag}\left(0,-\frac{1}{2},-\frac{1}{2},-\frac{1}{2}\right),\quad R=\frac{3}{2}.
\]
The left-invariant vector field $\xi_{a}$ and the right-invariant vector field $\eta_{a}$
on the group $SO(3)$ in Euler angles are
\begin{gather*}
\xi_{1}=\frac{\sin\psi}{\sin\theta}\frac{\partial}{\partial\phi}+\cos\psi\frac{\partial}{\partial\theta}-\cot\theta\sin\psi\frac{\partial}{\partial\psi},\\
\xi_{2}=\frac{\cos\psi}{\sin\theta}\frac{\partial}{\partial\phi}-\sin\psi\frac{\partial}{\partial\theta}-\cot\theta\cos\psi\frac{\partial}{\partial\psi},\quad\xi_{3}=\frac{\partial}{\partial\psi},\\
\eta_{1}=\cot\theta\sin\phi\frac{\partial}{\partial\phi}-\cos\phi\frac{\partial}{\partial\theta}-\frac{\sin\phi}{\sin\theta}\frac{\partial}{\partial\psi},\\
\eta_{2}=-\cot\theta\cos\phi\frac{\partial}{\partial\phi}-\sin\phi\frac{\partial}{\partial\theta}+\frac{\cos\phi}{\sin\theta}\frac{\partial}{\partial\psi},\quad\eta_{3}=-\frac{\partial}{\partial\phi}.
\end{gather*}
Each non-degenerate integer coajoint orbit of the group $SO(3)$ passes through the covector $\lambda(j)=(j,0,0)$, where $j=1,2,3,\dots$ and the orbit is a two-dimensional sphere of radius $j^{2}$ centered at $(0,0,0)$:
\[
\mathcal{O_{\lambda}}=\left\{ f\in\mathbb{R}^{3}\left|K(f)=j^{2},f\neq0\right.\right\} ,
\]
where $K(f)=f_{1}^{2}+f_{2}^{2}+f_{3}^{2}$ is Casimir
function. The set of operators $\{\hat{A}=-i\xi_{3},\hat{B}=i\eta_{3},\hat{K}\}$,
forms a complete set of operators on $SO(3)$. The solution of the system
(\ref{sysps}) reads
\[
\Phi_{J}(g)=\frac{1}{\sqrt{8\pi^2}}D_{mn}^{j}(g),\quad J=\{n,m,j\},
\]
where $D_{mn}^{j}(g)$ is the Wigner D-matrix of $SO(3)$ \cite{barut}:
\begin{align}
D_{mn}^{j}(g) & =e^{im\phi+in\psi}d_{mn}^{j}(\theta),\label{dmn1}
\end{align}
\[
d_{mn}^{j}(\theta)=(-1)^{m-n}\sqrt{\frac{(j+m)!(j-m)!}{(j+n)!(j-n)!}}\sin^{m-n}\frac{\theta}{2}\cos^{m+n}\frac{\theta}{2}P_{j-m}^{(m-n,m+n)}(\cos\theta).
\]
Here $P_{n}^{(\alpha,\beta)}(z)$ are the Jacobi polynomials,
\[
P_{n}^{(\alpha,\beta)}(z)=\frac{(-1)^{n}}{2^{n}n!}(1-z)^{-\alpha}(1+z)^{-\beta}\frac{d^{n}}{dz^{n}}\left[(1-z)^{n+\alpha}(1+z)^{n+\beta}\right]\text{.}
\]
The Wigner functions satisfy the orthogonality and completeness conditions
\begin{gather*}
\frac{1}{8\pi^2}
\int_{G}\overline{D_{mn}^{j}(g)}D_{\tilde{m}\tilde{n}}^{\tilde{j}}(g)d\mu(g)=\frac{\delta_{j\tilde{j}}}{2j+1}\delta_{m\tilde{m}}\delta_{n\tilde{n}},\\
\sum_{n=-j}^{j}\overline{D_{mn}^{j}(g)}D_{\tilde{m}n}^{j}=\delta_{m\tilde{m}},
\end{gather*}
where $d\mu (g)$ is the Haar measure defined by the formula
\begin{equation}\nonumber
\int_{G}(\cdot)d\mu(g)=\int_{0}^{2\pi}d\psi\int_{0}^{\pi}\sin\theta d\theta\int_{0}^{2\pi}d\phi(\cdot).
\end{equation}
From (\ref{dmn1}) it follows
\[
\theta_{J}=(-1)^{m-n},\quad\overline{J}=\{-m,-n,j\}.
\]
The complex polarization $\mathfrak{p}=\{e_{1},e_{2}+ie_{3}\}$ of the covector $\lambda(j)$ corresponds to the operators of $\lambda$-representation (\ref{lrpr}) \cite{br2011}
\begin{gather*}
\ell_{1}(q,j)=-i(\sin(q)\frac{\partial}{\partial q}-j\cos(q)),\\ \nonumber
\ell_{2}(q,j)=-i(\cos(q)\frac{\partial}{\partial q}+j\sin(q)),\quad\ell_{3}(q,j)=\frac{\partial}{\partial q}.\label{lprSO3}
\end{gather*}
The operators $-i\ell_{X}(q,j)$ are Hermitian with respect to the
scalar product
\[
(\psi_{1},\psi_{2})=\int_{Q}\overline{\psi_{1}(q)}\psi_{2}(q)d\mu(q),\quad d\mu(q)=\frac{(2j+1)!}{2^{j}(j!)^{2}}\frac{dq\wedge d\overline{q}}{(1+\cos(q-\overline{q}))^{j+1}}.
\]
The functions $D_{q\overline{q'}}^{\lambda}(g^{-1})$ are found by integrating the system of equations (\ref{tD}) and have the form
\begin{gather}\nonumber
D_{q\overline{q'}}^{\lambda}(g^{-1})=D_{q\overline{q}'}^{j}(\phi,\theta,\psi)=
\frac{1}{\sqrt{8\pi^2}}\frac{2^{j}(j!)^{2}}{(2j)!}\times \\
\times\left[\cos\theta+\cos(\phi+\overline{q}')\cos(\phi-q)e^{-i\theta}+\sin(\phi+\overline{q}')\sin(\psi-q)\right]^{j}.\label{Dso3}
\end{gather}
The functions (\ref{Dso3}) satisfy the orthogonality and completeness conditions
(\ref{Dort})\textendash (\ref{Dful}) with respect
to the measure $d\mu(\lambda)$ and the delta-functions $\delta(q,\overline{q'})$:
\[
\int_{J}(\cdot)d\mu(\lambda)=\sum_{j=0}^{\infty}(2j+1)(\cdot),\quad\delta(q,\overline{q'})=\frac{2^{j}(j!)^{2}}{(2j)!}\left(1+\cos(q-\overline{q'})\right).
\]
In our case, the relationship (\ref{rrpsi}) takes the form
\begin{eqnarray*}
D_{mn}^{j}(g) & = & C_{mn}^{j}\int e^{i(n\overline{q}-mq')}D_{q\overline{q'}}^{\lambda}(g^{-1})d\mu(q)d\mu(q'),\\
C_{mn}^{j} & = & \sqrt{8\pi^2}e^{i\pi(n-m)/2}(j!)^{2}\left((j-m)!(j+m)!(j-n)!(j+n)!\right)^{-1/2}.
\end{eqnarray*}
Using expression (\ref{Dso3}) for the character of the $\lambda$-representation,
we get
\[
  \chi(\lambda)=\frac{1}{4\pi^2}\left(j+\frac{1}{2}\right).
\]

Thus, the Wigner D-matrix $D_{mn}^{j}(g)$ defines a basis
of solutions satisfying the condition (\ref{condtheta}). The density
of the created particles is given by
\begin{equation}\nonumber
n(\tau,g)=\frac{1}{4\pi^2}a^{1-n}(\tau)\sum_{j=0}^{\infty}\left(j+\frac{1}{2}\right)^{2}\frac{1}{\omega_{j}(\tau)}\left|\dot{f}_{j}(\tau)-i\omega_{j}(\tau)f_{j}(\tau)\right|^{2}.
\end{equation}

The vacuum expectation value of the EMT for the scalar field can
be written in the form
\begin{gather}
\langle\hat{\tilde{T}}(\eta_{0},\eta_{0})\rangle_{unren}=\frac{1}{4\pi^2 a^{2}(\tau)}\sum_{j=0}^{\infty}\left(j+\frac{1}{2}\right)^{2}\left(|f_{j}(\tau)|^{2}\omega_{j}^{2}(\tau)+|\dot{f}_{j}(\tau)|^{2}\right),\label{tso1}\\
\langle Sp\hat{\tilde{T}}\rangle_{unren}=\frac{m^{2}}{2\pi^2 a^{2}(\tau)}\sum_{j=0}^{\infty}\left(j+\frac{1}{2}\right)^{2}|f_{j}(\tau)|^{2},\nonumber
\end{gather}
where $\omega_{j}^{2}(\tau)=(j+1/2)^{2}+a^{2}(\tau)m^{2}$.
Expressions (\ref{tso1}) allow us to calculate the vacuum polarization
effect of the scalar field for a given nonstationary metric in the
space $\tilde{M}$.

Consider the vacuum expectation values of the EMT in the adiabatic approximation accurate to the fourth order:
\begin{align*}
\langle\hat{\tilde{T}}(\eta_{0},\eta_{0})\rangle_{(4)} & =\frac{1}{4\pi^2 a^{2}(\tau)}\sum_{j=0}^{\infty}\left(j+\frac{1}{2}\right)^{2}t_{0}[a,j],\\
\langle Sp(\hat{T})\rangle_{(4)} & =\frac{1}{2\pi^2 a^{2}(\tau)}\sum_{j=0}^{\infty}\left(j+\frac{1}{2}\right)^{2}t_{1}[a,j],
\end{align*}

\begin{gather*}
t_{0}[a,j]=\omega_{j}+\frac{m^{4}a^{4}}{8\omega_{j}^{5}}\frac{\dot{a}^{2}}{a^{2}}-\frac{m^{4}a^{4}}{32\omega_{j}^{7}}\left(2\frac{{a}^{(3)}\dot{a}}{a^{2}}-\frac{\ddot{a}^{2}}{a^{2}}+4\frac{\ddot{a}\dot{a}^{2}}{a^{3}}-\frac{\dot{a}^{4}}{a^{4}}\right)+\\ \nonumber
+\frac{7m^{6}a^{6}}{16\omega_{j}^{9}}\left(\frac{\ddot{a}\dot{a}^{2}}{a^{3}}+\frac{\dot{a}^{4}}{a^{4}}\right)-\frac{105m^{8}a^{8}}{128\omega_{j}^{11}}\frac{\dot{a}^{4}}{a^{4}},
\end{gather*}
\begin{gather*}
t_{1}[a,j]  =\frac{m^{2}a^{2}}{\omega_{j}}+\frac{m^{4}a^{4}}{4\omega_{j}^{5}}\left(\frac{\ddot{a}}{a}+\frac{\dot{a}^{2}}{a^{2}}\right)-\frac{5m^{6}a^{6}}{8\omega_{j}^{7}}\frac{\dot{a}^{2}}{a^{2}}-\\
-\frac{m^{4}a^{4}}{16\omega_{j}^{7}}\left(\frac{{a}^{(4)}}{a}+4\frac{{a}^{(3)}\dot{a}}{a^{2}}+3\frac{\ddot{a}^{2}}{a^{2}}\right)+\frac{m^{6}a^{6}}{32\omega_{j}^{9}}\left(28\frac{{a}^{(3)}\dot{a}}{a^{2}}+126\frac{\ddot{a}\dot{a^{2}}}{a^{3}}+21\frac{\ddot{a}^{2}}{a^{2}}+21\frac{\dot{a}^{4}}{a^{4}}\right)-\\
-\frac{231m^{8}a^{8}}{32\omega_{j}^{11}}\left(\frac{\ddot{a}\dot{a}^{2}}{a^{3}}+\frac{\dot{a}^{4}}{a^{4}}\right)+\frac{1155m^{10}a^{10}}{128\omega_{j}^{13}}\frac{\dot{a}^{4}}{a^{4}}.
\end{gather*}

To calculate the renormalized values in the adiabatic approximation,
we subtract the expressions (see \eqref{Tren})
\begin{align*}
\langle\hat{\tilde{T}}(\eta_{0},\eta_{0})\rangle_{ad} & =\frac{1}{4\pi^2 a^{2}(\tau)}\int_{0}^{\infty} j^{2}t_{0}[a,j] dj,\\
\langle Sp(\hat{T})\rangle_{ad} & =\frac{1}{2\pi^2 a^{2}(\tau)}\int_{0}^{\infty} j^{2} t_{1}[a,j]dj.
\end{align*}
Calculation of renormalized expressions reduces to obtaining expressions
of the form
\begin{equation}\label{defSp}
S(p)=\left\{ \sum_{j=1/2}^{\infty}-\int_{0}^{\infty}dj\right\} j^{2}\left(j^{2}+a^{2}(\tau)m^{2}\right)^{p+1/2}.
\end{equation}
To calculate sums over $j$, we use the Abel--Plan formula
\[
\sum_{j=0}^{\infty}F\left(j+\frac{1}{2}\right)=\int_{0}^{\infty}F(x)dx-i\int_{0}^{\infty}\frac{F(it)-F(-it)}{e^{2\pi t}+1}dt.
\]
Then we have
\begin{gather*}
\sum_{j=0}^{\infty}\left(j+\frac{1}{2}\right)^{2}\left(\left(j+\frac{1}{2}\right)^{2}+C^{2}\right)^{p+1/2} =\\
= \int_{0}^{\infty}t^{2}\left(t^{2}+C^{2}\right)^{p+1/2}dt+(-1)^{p}\Phi(p,C),\\
\Phi(p,C) = \int_{C}^{\infty}\frac{t^{2}\left(t^{2}-C^{2}\right)^{p+1/2}}{e^{2\pi t}+1}dt,\quad C = a(\tau)m.
\end{gather*}
Here we take into account the signs related to rounding the branch points $t=\pm iC$ of the
function $\sqrt{t^{2}+C^{2}}\left(t^{2}+C^{2}\right)^{p}$ using
the equality ( see, e.g., \cite{most01})
\[
F(it)-F(-it)=2i\,t^{2}\left(t^{2}-C^{2}\right)^{p+1/2}.
\]
As a result, for the sums
$S(p)$ we get the expression:
\[
S(p)=(-1)^{p}\Phi(p,C)\text{.}
\]
The renormalized vacuum expectation values % averages
in the adiabatic approximation have
the form
\begin{gather}\nonumber
	\langle\hat{\tilde{T}}(\eta_{0},\eta_{0})\rangle_{ren} =\frac{1}{4\pi^2 a^{2}(\tau)}\bigg{\{} S(0)+\frac{m^{4}a^{4}}{8S(-3)}\frac{\dot{a}^{2}}{a^{2}} -\\ \nonumber
	-\frac{m^{4}a^{4}}{32S(-4)}\left(2\frac{{a}^{(3)}\dot{a}}{a^{2}}-\frac{\ddot{a}^{2}}{a^{2}}+4\frac{\ddot{a}\dot{a}^{2}}{a^{3}}-\frac{\dot{a}^{4}}{a^{4}}\right)+ \\ \nonumber
	+\frac{7m^{6}a^{6}}{16S(-5)}\left(\frac{\ddot{a}\dot{a}^{2}}{a^{3}}+\frac{\dot{a}^{4}}{a^{4}}\right)-\frac{105m^{8}a^{8}}{128S(-6)}\frac{\dot{a}^{4}}{a^{4}}\bigg{\}},\nonumber \\
	\langle Sp(\hat{T})\rangle_{ren} = \frac{1}{2\pi^2 a^{2}(\tau)}\bigg{\{} \frac{m^{2}a^{2}}{S(-1)}+\frac{m^{4}a^{4}}{4S(-3)}\left(\frac{\ddot{a}}{a}+\frac{\dot{a}^{2}}{a^{2}}\right)-\nonumber \\
	-\frac{5m^{6}a^{6}}{8S(-3)}\frac{\dot{a}^{2}}{a^{2}}-\frac{m^{4}a^{4}}{16S(-4)} +
        \left(\frac{{a}^{(4)}}{a}+4\frac{{a}^{(3)}\dot{a}}{a^{2}}+3\frac{\ddot{a}^{2}}{a^{2}}\right)+\nonumber \\ \nonumber
	+\frac{m^{6}a^{6}}{32S(-5)}\left(28\frac{{a}^{(3)}\dot{a}}{a^{2}}+126\frac{\ddot{a}\dot{a^{2}}}{a^{3}}+21\frac{\ddot{a}^{2}}{a^{2}}+21\frac{\dot{a}^{4}}{a^{4}}\right)-\\
	-\frac{231m^{8}a^{8}}{32S(-6)}\left(\frac{\ddot{a}\dot{a}^{2}}{a^{3}}+\frac{\dot{a}^{4}}{a^{4}}\right)+\frac{1155m^{10}a^{10}}{128S(-7)}\frac{\dot{a}^{4}}{a^{4}}\bigg{\}}, \label{So3ren}
\end{gather}
Note that the four-dimensional closed homogeneous isotropic Robertson--Walker space $ K_4 $ of positive curvature $ R = 1 $ is described by the group $ G = SU (2) $. The group $ SU (2) $ is a universal covering group of $ SO (3) $ and it has the topology of a three-dimensional sphere. The vacuum expectation values of EMT in the adiabatic approximation are described by the same expressions (\ref{So3ren}), but with the function $S(p)$ equal to (see Ref. \cite{and87}):
\begin{equation}\label{defSp2}
   S(p)|_{K_4} = \left\{ \sum_{j=1}^{\infty}-\int_{0}^{\infty}dj\right\} j^{2}\left(j^{2}+a^{2}(\tau)m^{2}\right)^{p+1/2}.
\end{equation}
The difference between (\ref{defSp}) and (\ref{defSp2}) is due to the different topology of the Lie groups $SO(3)$ and $SU(2)$.

\section{Conclusion\label{sec:Conclusion}}

The vacuum expectation values of the EMT for a scalar field are shown to be invariant with respect to the
adjoint representation of a group Lie $G$ (Theorem 1). The expectation values are determined by the characters of the $\lambda$-representation of the Lie group and solutions of the oscillator equation with a variable frequency. The solutions of this equation are determined by the time dependence in the metric.

The tetrad components (\ref{T00}) and (\ref{Tab2}) for the non-renormalized vacuum expectation values of the EMT are found.

To obtain finite values of the EMT, we apply the adiabatic regularization method which is widely used in homogeneous and isotropic spaces \cite{parker}.
Calculations of the regularized terms corresponding to divergences in the EMT, $\langle\hat{\tilde T}(\eta_{X}, \eta_{Y}) \rangle_{unren}$, are to be carried out without taking into account the global spatial topology.

We also specified the procedure for constructing a basis of solutions
for which the property (\ref{condtheta}) holds. This is necessary
for the Hamiltonian diagonalization of the scalar field. An expression for
the density of created particles is found (\ref{ntt}).
The group topology in the effects of vacuum polarization and
particle creation is manifested under the orbital integrity condition (\ref{cel_D}).

The results obtained are illustrated by the example of the rotation group $SO(3)$. Such a space-time differs from the closed Robertson--Walker universe in its spatial topology. The space of the closed Robertson--Walker universe has the topology of a three-dimensional sphere, while the rotations group has the topology of the three-dimensional projective space. This difference manifests itself in the form of the topological term $S(p)$ \eqref{defSp}, which determines the renormalized vacuum expectation values (\ref{So3ren}) in the adiabatic approximation. The relationship between the $\lambda$-representation of the group $SO(3)$ and the Wigner function is found.

A separate issue, beyond the scope of this work, is the search for self-consistent solutions of the Einstein field equations with an EMT $\langle\hat{\tilde{T}}(\eta_{X},\eta_{Y})\rangle_{ren}$ in the adiabatic approximation.

\section*{Acknowledgements}

Breev and Shapovalov were partially supported by Tomsk State University under the International Competitiveness Improvement Program; Breev was partially supported by the Russian Foundation for Basic Research (RFBR) under the project No. 18-02- 00149; Shapovalov was partially supported by Tomsk Polytechnic University under the International Competitiveness Improvement Program and by RFBR and Tomsk region according to the research project No. 19-41-700004.

\end{document}